\newcommand{\ems}[1]{{#1}}
\newtheorem{theorem}{Theorem}
\newtheorem{lemma}{Lemma}
\newtheorem{claim}{Claim}
\newenvironment{claimProof}[1]{\par\noindent\underline{Proof:}\space#1}{\hfill $\blacksquare$}
\newcommand{\remove}[1]{}
\newcommand{\Section}[1]{\section{#1}}
\newcommand{\Subsection}[1]{\subsection{#1}}
\newcommand{\B}{}
\newcommand{\BB}{}
\newcommand{\BBB}{}
\newcommand{\BigO}[0]{{\cal O}\xspace}
\begin{document}


\title{Self-Stabilizing \ems{Automatic Repeat Request Algorithms for (Bounded Capacity, Omitting, Duplicating and non-FIFO) Computer Networks}}

\author{Shlomi Dolev~\footnote{Department of Computer Science, Ben-Gurion University of the Negev, Beer-Sheva, Israel \texttt{\{dolev, hanemann, sharmas\}@cs.bgu.ac.il}} \and Ariel Hanemann $^\ast$ \and Elad Michael Schiller \footnote{Department of Computer Science and Engineering, Chalmers University of Technology, G\"{o}teborg, Sweden, \texttt{elad@chalmers.se}} \and Shantanu Sharma $^\ast$}


%
%
%
%

\maketitle

\begin{abstract}
End-to-end communication over the network layer (or data link in overlay networks) is one of the most important communication tasks in every communication network, including legacy communication networks as well as mobile ad hoc networks, peer-to-peer networks and mesh networks. Reliable end-to-end communications are based on Automatic Repeat reQuest (ARQ) algorithms for dealing with packet failures, such as packet drops. We study ARQ algorithms that exchange packets to deliver (high level) messages in first-in-first-out (FIFO) order without omissions or duplications. We present a self-stabilizing ARQ algorithm that can be applied to networks of bounded capacity that \ems{are prone to packet loss, duplication, and reordering.} 
%
%
\ems{Our analysis considers Lamport's happened-before relation when demonstrating stabilization without assuming the presence of a fair scheduler. It shows that the length of the longest chain of Lamport's happened-before relation is $8$ for any system run.} 
\end{abstract}




\section{Introduction}



\ems{Reliable} end-to-end communication is a basic primitive in communication networks. A {\em Sender} must deliver to a {\em Receiver} one message at a time, where no omissions, duplications and reordering are allowed. \ems{Failures may} occur in transmitting packets among the network entities -- one significant \ems{cause of failure} is noise in the transmission media. Thus, error detection and error correction methods are employed as an integral part of the transmission in the communication network. These error detection and correction codes function with high probability. Still, when there is a large volume of communication sessions, the probability that an error will not be detected becomes high, leading to a possible malfunction of the communication algorithm. In fact, it can lead the algorithm to an arbitrary state from which the algorithm may never recover unless it is \emph{self-stabilizing}~(\cite{D2K}, Chapter~$3$). By using packets with enough distinct labels infinitely often, we \ems{propose the design of  a self-stabilizing Automatic Repeat reQuest (ARQ)} algorithm that can be applied to computer networks of bounded capacity that \ems{may} omit, duplicate, and reorder packets.


In the context of existing large-scale networks, such as the Internet, it is often the case that the correctness of transport layer protocols, such as TCP/IP, depends on the correct estimation of the maximum segment lifetime (MSL). This bound on the time that a TCP segment can exist in the Internet is arbitrarily defined to be $2$ minutes long.~\footnote{``RFC 793''. Transmission Control Protocol. September 1981.} In some systems, MSL determines the time in which the kernel waits before releasing the resources that are bound to a partially closed TCP/IP connection~\cite{stevens2004unix}. Thus, failing to estimate MSL correctly, say, too quickly, can cause the delivery of TCP segments that were not sent during the lifetime of the current connection (they were rather sent during previous connection incarnations). When selecting an MSL value that is too long, the host becomes exposed to attacks in which a selective packet omission of the connection closing acknowledgment can delay the resource release until the MSL period is over. We generalize the above network anomalies and consider asynchronous settings in which, in addition to message omission and duplication, the Receiver can receive messages from the network, \ems{i.e., messages} that the Sender have not transmitted (during the connection lifetime since these messages are rather the result of transient faults). We bound by the network capacity the number of \ems{such messages, which} the Sender have not transmitted (during the connection lifetime).


The dynamic and difficult-to-predict nature of large scale networks gives rise to many fault-tolerance issues that are hard to efficiently manage and control. One would prefer a system that automatically recovers from unexpected failures, possibly as part of after-disaster recovery, transient faults due to hardware or software temporal malfunctions or even after benign temporal violation of the assumptions made in the system design. 

For example, the assumption that error detection ensures the arrival of correct messages and the discarding of corrupted messages. In practice, error detection is a probabilistic mechanism that may not detect a corrupted message, and therefore, the message can be regarded legitimate, driving the system to an arbitrary state after which, availability and functionality may be damaged forever, unless there is human intervention.

Fault-tolerant systems that are {\em self-stabilizing}~\cite{D2K,DBLP:journals/cacm/Dijkstra74} can recover after the occurrence of \ems{arbitrary} transient faults, which can drive the system to an arbitrary system state. The \ems{designers of self-stabilizing systems consider all system states as a possible starting state.} There is a rich research literature about Automatic Repeat reQuest (ARQ) techniques for obtaining fault-tolerant protocol that provide end-to-end message delivery. However, when initiating a system in an arbitrary state, \ems{all non-self-stabilizing algorithms cannot provide any} guarantee that the system will reach a legal state after which the participants maintain a coherent state. \ems{Thus, the} self-stabilization design criteria liberate the \ems{application} designer from dealing with specific fault scenarios, the risk of neglecting some scenarios, and having to address each fault scenario separately.

New challenges appear when designing self-stabilizing \ems{ARQ algorithms.} One significant challenge is to provide an ordering for message transmitted between the Sender and the Receiver, which is an even more intriguing problem when the system starts from an arbitrary state. Usually, new messages are identified by a new message number; a number greater than all previously used numbers. Counters of $64$-bits, or so, are usually used to implement such numbers. Such designs \ems{are often} justified by claiming that $64$-bit values suffice for implementing (practically) unbounded counters. However, a single (arbitrary) transient fault can cause the counter to reach the upper limit at once. \ems{Thus, all counters (as well as sequence numbers and labels) that our self-stabilizing algorithm uses have predefined sizes.}

\Subsection{Task description}
\ems{We follow the task description that appears in earlier self-stabilizing \ems{Automatic Repeat reQuest (ARQ)} algorithms~\cite{D2K,DBLP:journals/ipl/DolevDPT11}.}   
We consider a pair that includes a Sender and a Receiver.
The Sender fetches (one at a time) messages from an ordered sequence of its application-layer input, and the Receiver delivers these messages to its application layer.
To that end, the Sender and the Receiver are connected through unreliable communication channels that allow them to exchange packets. 
We focus on bounded capacity communication channels that can adversely duplicate packets, have them received not in the order in which they were sent (i.e., non-FIFO networks) and omit packets (as long as when one of the communication end transmits a packet infinitely often, the other end receives them  infinitely often). 

Our view on the \ems{Automatic Repeat reQuest (ARQ)} task considers the design of a self-stabilizing \ems{ARQ} algorithms that can (\textit{i}) ensure exactly one copy of message delivery and in the same order as these messages were fetched from the application layer, (\textit{ii}) handling (packet-level) corruption, omission, and duplication of by the communication channel, (\textit{iii}) ensuring applicability to computer networks, and (\textit{iv}) recovering autonomously after the \ems{(occurrence of the} last) transient fault, which may leave the system in an arbitrary starting state for \ems{the communication endpoints as well as} the communication channels between them (including the relay nodes).

\Subsection{Related work}
End-to-end communication and data-link algorithms are fundamental for any network protocol~\cite{DBLP:books/daglib/0008392}.  \ems{Automatic Repeat reQuest (ARQ)} algorithms provide the means for message exchange between senders and receivers over unreliable communication links. Not all end-to-end communications and data-link algorithms assume initial synchronization between senders and receivers. 
Alternating bit protocol (ABP) can transmit data over an unreliable channel~\cite{Tel2001}. It is a special case of go-back-N and sliding window protocols~\cite{DBLP:books/daglib/0008392} when the window size is one. In ABP, the Sender and the Receiver use only an index with two stated, say 0 and 1, that can be encoded by a single bit. ABP does not consider non-FIFO channels, as we do. The proposed algorithm considers an index with three states in order to overcome the challenge of non-FIFO channels. Details about sliding window protocol may be found in~\cite{DBLP:books/daglib/0008392,forouzan2006data}.
Afek and Brown~\cite{DBLP:journals/dc/AfekB93} present a self-stabilizing alternating bit protocol (ABP) for first-in-first-out (FIFO) packet channels without the need for initial synchronization. A method for self-stabilizing token passing was used as the basis for self-stabilizing ABP over unbounded capacity and FIFO preserving channels in~\cite{DBLP:journals/tc/GoudaM91,DBLP:journals/siamcomp/DolevIM97}. Spinelli~\cite{DBLP:journals/ton/Spinelli97} introduced two self-stabilizing sliding window ARQ protocols for unbounded FIFO channels. Dolev and Welch~\cite{DBLP:journals/tc/DolevW97} consider the bare network to be a network with FIFO non-duplicating communication links, and use source routing over paths to cope with crashes. 
%
%
We instead of looking into all the system details, network topology and routing policy, we merely consider the total network capacity and assume that it is bounded.
%

The authors of~\cite{DBLP:conf/focs/AwerbuchPV91,Varghese93} present a self-stabilizing unit capacity data link algorithm over omitting, and non-FIFO, yet non-duplicating channel. Flauzac and Villain~\cite{DBLP:conf/ispan/FlauzacV97} describe a snapshot algorithm that uses bidirectional and FIFO communication channels. Cournier et al.~\cite{DBLP:conf/ipps/CournierDV09} consider an algorithm for message forwarding over message switched network. This algorithm is snap-stabilizing \cite{DBLP:conf/wss/BuiDPV99} and it ensures one-time-delivery of the emitted message to the destination within a finite time using destination based buffer graph. \ems{Cournier et al., however, does not tolerate duplication, as we do since they assume underline FIFO packet delivery. Moreover, our algorithm design aims at arbitrary system runs for which there are no guarantees regarding scheduling fairness. To that end, our analysis studies the length of the longest chains of Lamport's happened-before relation during period of system recovery.}

%
In the context of computer networks, a non-self-stabilizing protocol, called \textit{Slide}, is given in~\cite{DBLP:conf/podc/AfekGR92}, which does not consider packet loss. Following the \textit{Slide} protocol, Kushilevitz et al.~\cite{DBLP:conf/stoc/KushilevitzOR95} reduced the space that each mobile node needs by adding a message cancelling policy. However, this non-self-stabilizing protocol does not cope with packet duplication. 


In~\cite{DBLP:conf/pdcat/BeinMY09}, a self-stabilizing transformer is given that emulates the reliable channel by embedding the virtual topology on the real topology. Dela$\ddot{e}$t et al.~\cite{DBLP:journals/jpdc/DelaetDNT10} provided a snap-stabilizing propagation of information with feedback (PIF) algorithm for unit capacity channel that is capable to handle duplication by the Sender and loss of packets. Cournier et al.~\cite{DBLP:conf/sss/CournierDLPV10} provides a snap stabilizing message forwarding protocol over the linear chain of nodes. This message forwarding algorithm uses four buffers per link. However, all the protocols that we mentioned above do not cater the duplication by the channel.

Recently, Dolev et al.~\cite{DBLP:journals/ipl/DolevDPT11} presented a self-stabilizing data link algorithm for reliable FIFO message delivery over bounded non-FIFO and non-duplicating channel. We do consider duplicating channels. Moreover, the algorithm in~\cite{DBLP:journals/ipl/DolevDPT11} \ems{overflows} the channel by sending an identical message repeatedly until the Sender collects a sufficient number of acknowledgments. Furthermore, for every message that the sender fetches, the algorithm at the Sender and the Receiver use explicitly synchronization, unlike the solution that we propose here. Dolev et al.~\cite{DBLP:journals/ipl/DolevDPT11} demonstrate that their algorithm is fault-resilience optimal with respect to FIFO-preserving communication channels. We note that their algorithm is not fault-resilience optimal with respect to non-FIFO preserving communication channels that we consider here.
\ems{An earlier version of this work appeared} as an extended abstract in~\cite{DBLP:conf/sss/DolevHSS12}.

\Subsection{Our contribution}
We investigate the basic networking tasks of \ems{data-link protocols as well as reliable end-to-end communications over the network layer or overlay networks.}
%
\ems{Towards facilitating the design of these fundamental protocols, this} paper presents the first, to the best of our knowledge, self-stabilizing \ems{Automatic Repeat reQuest (ARQ)} algorithms for reliable FIFO message delivery over bounded non-FIFO and duplicating \ems{communication channels.} We provide a rigorous correctness proof and demonstrate the  self-stabilizing \ems{closure and convergence properties. Our analysis considers Lamport's happened-before relation when demonstrating stabilization without assuming the presence of a fair scheduler. It shows that the length of the longest chain of Lamport's happened-before relation is $8$ for any system run.}

\Section{System Settings}
\label{s:sys}
Node-to-node protocols at the data link layers, as well as end-to-end protocols at the transport layer, use Automatic Repeat reQuest (ARQ) algorithms for facilitating reliable data communication protocols over unreliable media of communication. We describe our assumptions about the system and network.




\Subsection{Communication channels}
The system establishes bidirectional communication between the Sender, $p_s$, and the Receiver, $p_r$, which may not be connected directly. Namely, between $p_s$ and $p_r$ there is a unidirectional {\em (communication) channel} (modeled as a packet set {and denoted by $channel_{s,r}$}) that transfers packets from $p_s$ to $p_r$, and another unidirectional channel that transfers packets from $p_r$ to $p_s$. The Sender and the Receiver can be part of a network that its topology is depicted by a {\em (communication) graph} of $N$ {\em nodes} (or processors), $p_1$, $p_2$, $\ldots$, $p_N$, \ems{where $N$ is unknown the end-to-end peers.} The graph has {\em (direct communication) links}, $(p_i, p_j)$, whenever $p_i$ can directly send packets to its {\em neighbor}, $p_j$ (without the use of network layer protocols). When node $p_i$ sends a packet, $pckt$, to node $p_j$, the operation $send$ adds a copy of $pckt$ to $channel_{i,j}$. We intentionally do not specify (the possibly unreliable) underlying mechanisms that are used to forward a packet from $p_i$ to $p_j$, e.g., flood routing and shortest path routing, as well as packet forwarding protocols. Once $pckt$ arrives at $p_j$, \ems{it} triggers the $receive$ event, and deletes $pckt$ from the channel set. 

\Subsection{The interleaving model}
Every node, $p_i$, executes a program that is a sequence of {\em (atomic) steps}, where a step starts with local computations and ends with a communication operation, which is either $send$ or $receive$ of a packet. For ease of description, we assume the interleaving model, where steps are executed atomically; a single step at any given time. An input event can either be a packet reception or a periodic timer going off triggering $p_i$ to send. Note that the system is asynchronous (while assuming fair communication but not execution fairness~\cite{D2K}). The non-fixed spontaneous node actions and node processing rates are irrelevant to the correctness proof.

The {\em state}, $s_i$, of a node $p_i$ consists of the value of all the variables of the node including the set of all incoming communication channels. The execution of an algorithm step can change the node's state, and the communication channels that are associated with it. The term {\em (system) configuration} is used for a tuple of the form $(s_1, s_2, \cdots, s_N)$, where each $s_i$ is the state of node $p_i$ (including packets in transit for $p_i$). We define an {\em execution (or run)} $R={c_0,a_0,c_1,a_1,\ldots}$ as an alternating sequence of system configurations, $c_x$, and steps $a_x$, such that each configuration $c_{x+1}$ (except the initial configuration $c_0$) is obtained from the preceding configuration, $c_x$, by the execution of the steps $a_x$. We often associate the step index notation with its executing node $p_i$ using a second subscript, i.e., $a_{i_x}$. We represent the omissions, duplications, and reordering using environment steps that are interleaved with the steps of the processors in $R$.

\remove{ 

\Subsection{Asynchronous executions that allow progress}
%
%
We say that an asynchronous execution, $R$, allows progression when every algorithm step that is applicable infinitely often in $R$ is executed infinitely often in $R$. Moreover, we require that $R$ allows progression with respect to communication. Namely, $p_i$'s infinitely often $send$ operations of a packet, $pckt$, to $p_j$, imply infinitely often $receive$ operations of $pckt$ by $p_j$. Thus, the communication graph may often change and the communication delays may change, as long as they respect the upper bound, $N$, on the number of nodes, the network capacity and the above requirements. We allow any churn rate, assuming that joining processors reset their own memory, and by that prevent the introduction of information about packets other than the ones that exist in $\{ p_1, p_2, \ldots, p_N \}$, i.e., respecting the assumed bounded packet capacity of the entire network.

When considering system convergence to legal behavior, we measure the number of {\em asynchronous rounds}. We define the first asynchronous round in an execution $R$ as the shortest prefix, $R^{\prime}$, of $R$ in which node $p_i$ sends at least one packet to $p_j$ via their communication channel, and $p_j$ receives from this channel at least one packet that was sent from $p_i$, where $(p_i = p_s \land p_j = p_r)$ or $(p_i = p_r \land p_j = p_s)$. The second asynchronous round, $R^{\prime\prime}$, is the first asynchronous round in $R$'s suffix that follows the first asynchronous round, $R^{\prime}$, and so on. Namely, $R=R^{\prime}\circ R^{\prime\prime}\circ R^{\prime\prime\prime}\ldots$, where $\circ$ is the concatenation operator.

} 

\Subsection{The task}
We define the system's task by a set of executions called {\em legal executions} ($LE$) in which the task's requirements hold. A configuration $c$ is a {\em safe configuration} for an algorithm and the task of $LE$ provided that any execution that starts in $c$ is a legal execution, which belongs to $LE$. The proposed {\em self-stabilizing  \ems{Automatic Repeat reQuest (ARQ)} communication} algorithm satisfies the $S^2ARQ$ task and provides FIFO and exactly once-delivery guarantees.

%
%
In detail, given a system execution, $R$, and a pair, $p_s$ and $p_r$, of sending and receiving nodes, we associate the message sequence $s_R= {{im}_{0}}, {{im}_{1}}, {{im}_{2}}, \ldots$, which are fetched by $p_s$, with the message sequence $r_R= {{om}}_{0}, {{om}_{1}}, {{om}_{2}}, \ldots$, which are delivered by $p_r$. Note that we list messages according to the order they are fetched (from the higher level application) by the Sender, thus two or more (consecutive or non-consecutive) messages can be identical. The $S^2ARQ$ task requires that for every legal execution, $R \in LE_{}$, there is an infinite suffix, $R^\prime$, in which infinitely many messages are delivered, and $s_{R^\prime} = r_{R^\prime}$. 





\ems{\subsection{The Fault Model}
We model a fault occurrence as a step that the environment takes rather than the algorithm. The studied communication environment is unreliable in the sense that packets are not actually received by their sending order, as we specify next.} 

\subsubsection{\ems{Bounded} channel capacity}
We assume that, at any given time, the entire number of packets in the system does not exceed a known bound, which we call $capacity$. This bound can be calculated by considering the possible number of network links, number of system nodes, the (minimum and maximum) packet size and the amount of memory that each node allocates for each link. \ems{We note that, in the context of self-stabilization, bounded channel capacity is a prerequisite for achieving the studied task (\cite{D2K}, Chapter 3). We clarify that when node $p_i$ sends a packet, $pckt$, to node $p_j$, the operation $send$ adds a copy of $pckt$ to $channel_{i,j}$, as long as the system follows the assumption about the upper bound on the number of packets in the channel. The environment can do that by (i) omitting from $channel_{i,j}$ any packet, or (ii) simply ignoring this send operation, i.e., omitting $pckt$.} 


\ems{\subsubsection{Communication faults}}
We consider solutions that are oriented towards asynchronous message-passing systems. Thus, they are oblivious to the time in which the packets arrive and depart. We assume that, at any time, the communication channels are prone to packet faults, such as loss (omission), duplication, reordering, as long as the system does not violate the channel capacity bound. 

\ems{\subsubsection{Communication fairness}
We consider solutions that are oriented towards asynchronous message-passing systems. Thus, they are oblivious to the time in which the packets arrive and depart. We assume that, at any time, the communication channels are prone to packet faults, such as loss (omission), duplication, reordering, as long as the system does not violate the channel capacity bound. Moreover, we assume that if $p_i$ sends a packet infinitely often to $p_j$, node $p_j$ receives that message infinitely often. We refer to the latter as the \emph{fair communication} assumption.} Note that communication fairness does not imply execution fairness~\cite{D2K}. Moreover, if the communication channels between $p_s$ and $p_r$ are not fair, then the adversary has the power to permanently stop effective communication between the \ems{communicating peers. Furthermore, we note that neither the assumption about the bounded channel capacity nor the communication fairness assumption can prevent from the environment from injecting any of the above failures infinitively often.}      

\ems{\subsubsection{Arbitrary transient faults}
We consider any violation of the assumptions according to which the system was designed to operate. We refer to these violations and deviations as \emph{arbitrary transient faults} and assume that they can corrupt the system state arbitrarily (while keeping the program code intact). The occurrence of an arbitrary transient fault is rare. Thus, as in~\cite{D2K,DBLP:journals/cacm/Dijkstra74}, our model assumes that the last arbitrary transient fault occurs before the system execution starts. Moreover, it leaves the system to start in an arbitrary state.} Note that \ems{arbitrary} transient faults can bring the system to consist of arbitrary, and yet capacity bounded, channel sets.

\ems{\subsection{Self-stabilization}}

\ems{The self-stabilization design criterion was introduced by Dijkstra~\cite{DBLP:journals/cacm/Dijkstra74}.}

\ems{\subsubsection{Dijkstra's self-stabilization criterion}
\label{sec:Dijkstra}
An algorithm is \emph{self-stabilizing} with respect to the task of $LE$, when every (unbounded) execution $R$ of the algorithm reaches within a finite period a suffix $R_{legal} \in LE$ that is legal. That is, Dijkstra~\cite{DBLP:journals/cacm/Dijkstra74} requires that $\forall R:\exists R': R=R' \circ R_{legal} \land R_{legal} \in LE \land |R'| \in \mathbb{N}$, where the operator $\circ$ denotes that $R=R' \circ R''$ concatenates $R'$ with $R''$.}

\ems{\subsubsection{Complexity Measures and Lamport's happened-before relation}
\label{sec:timeComplexity}
The main complexity measure of self-stabilizing algorithms, called \emph{stabilization time} (or recovery time), refers to period that it takes the system to recover after the occurrence of the last transient fault. Lamport~\cite{DBLP:journals/cacm/Lamport78} defined the happened-before relation as the least strict partial order on events for which: (i) If steps $a, b \in R$ are taken by processor $p_i$, the happened-before relation $a \rightarrow b$ holds if $a$ appears in $R$ before $b$. (ii) If step $a$ includes sending a message $m$ that step $b$ receives, then $a \rightarrow b$. Using the happened-before definition, one can create a directed acyclic (possibly infinite) graph $G_R:(V_R,E_R)$, where the set of nodes, $V_R$, represents the set of system states in $R$. Moreover, the set of edges, $E_R$, is given by the happened-before relation. In this paper, we assume that the weight of an edge that is due to cases (i) and (ii) are zero and one, respectively. Since we do not assume any guarantee that execution $R$ is fair, we consider the weight of the heaviest directed path between two system state $c,c' \in R$ as the cost measure between $c$ and $c'$.}      

%

\Section{Background and Basic Results}
\label{s:bck}
Network protocols uses a variety of techniques for increasing their robustness, such as routing over  many paths and retransmissions. These techniques can cause erroneous behavior, e.g., message duplications and reordering. For the presentation's simplicity sake, we start, as a first attempt, with an ARQ algorithm that is self-stabilizing and copes network faults, such as packet omissions, duplications, and reordering. This first attempt algorithm has a large \ems{cost}, but it prepares the presentation of our proposal for an efficient solution (Section~\ref{s:alg}) that is based on error correction codes.

\Subsection{A first attempt solution}
We regard two nodes, $p_s$ and $p_r$, as sender, and respectively, receiver; see our first attempt sketch of an \ems{ARQ algorithm} in Figure~\ref{fig:smplalg}. The goal is for $p_s$ to fetch messages, $m$, from its application layer, send $m$ over the communication channel, and for $p_r$ to deliver $m$ to its application layer exactly once and in the same order by which the Sender fetched them from its application layer. The Sender, $p_s$, fetches the message $m$ and starts the transmission of $(2 \cdot capacity+1)$ copies of $m$ to $p_r$, and $p_r$ acknowledges $m$ upon arrival. These transmissions use distinct labels for each copy, i.e., $(2 \cdot capacity+1)$ labels for each of $m$'s copies. The Sender, $p_s$, does not stop retransmitting $m$'s packets until it receives from $p_r$  $(capacity+1)$ distinctly labeled acknowledgment packets, see details in Figure~\ref{fig:smplalg}.


\begin{figure*}
	\BBB\BBB\BBB\BBB\BBB
	\begin{smaller}

		
		
		\begin{framed}
			\begin{center}
				\includegraphics[scale=0.8]{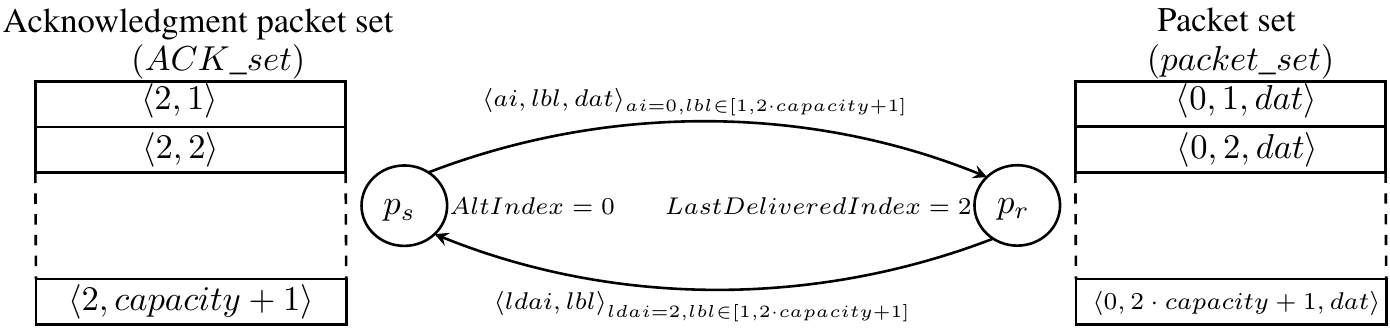}
			\end{center}
			
			\BBB\BBB\BBB
			The communication channels do not indicate to their receiving ends whether the transmitted packets were subject to duplication, omission or reordering. The algorithm facilitates the correct delivery of $m$ by letting $p_s$ send $(2 \cdot capacity+1)$ copies of the message $m = \langle dat \rangle$ to $p_r$, and requiring $p_r$ to receive $(2 \cdot capacity+1)$ packets, where the majority of them are copies of $m$. Namely, $p_s$ maintains an {\em alternating index}, $AltIndex \in [0,2]$, which is a counter that is incremented modulo $3$ every time $m$ is fetched and by that allow recovery from an arbitrary starting configuration. Moreover, $p_s$ transmits to $p_r$ a set of packets, $\langle ai, lbl, dat \rangle$, where $ai =AltIndex$, and $lbl$ are packet labels that distinguish this packet among all of $m$'s copies. The example illustrated above shows that when transmitting the packet set $\{ \langle 0,1,dat \rangle$, $\langle 0,2,dat \rangle$, $\ldots$, $\langle 0,2 \cdot capacity+1,dat \rangle \}$, the alternating index, $0$, distinguishes between this transmission set, and its predecessor set, which has the alternating index $2$, as well as the successor sets, which has the alternating index $1$. This transmission ends once $p_r$ receives a packet set, $\{ \langle 0, \ell, dat \rangle \}_{\ell \in [1, 2 \cdot capacity+1]}$, that is distinctly labeled by $\ell$ with respect to the alternating index $0$. (Note that when receiving a packet with a label that exists in the received packet set, the Receiver replaces the existing packet with the arriving one.) During legal executions, the set of received packets includes a majority of packets that have the same value of $dat$. When such a majority indeed exists, $p_r$ delivers $m = \langle dat \rangle$. After this decision, $p_r$ updates $LastDeliveredIndex \gets 0$ as the value of the last delivered alternating index.
			
			The correct packet transmission depends on the synchrony of $m$'s alternating index at the sending-side, and $LastDeliveredIndex$ on the Receiver side, as well as the packets that $p_r$ accumulates in $packet\_set_r$. The Sender repeatedly transmits this packet set until it receives $(capacity+1)$ distinctly labeled acknowledgment packets, $\langle ldai, lbl \rangle$, from the Receiver for which it holds that $ldai=AltIndex$. The Receiver acknowledges the Sender for each incoming packet, $\langle ai, lbl, dat \rangle$, using acknowledgment packet $\langle ldai, lbl \rangle$, where $ldai$ refers to the value, $LastDeliveredIndex$, of the last alternating index for which there was a receiving-side message delivery to the application layer. Thus, with respect to the above example, $p_s$ does not fetch another application layer message before it receives at least $(capacity+1)$ acknowledgment packets; each corresponding to one of the $(2 \cdot capacity+1)$ packets that $p_r$ received from $p_s$.
			%
			%
			On the receiving-side, $p_r$ delivers the message, $m = \langle dat \rangle$, from one of the $(capacity+1)$ (out of $(2\cdot capacity+1)$) distinctly labeled packets that have identical $dat$ and $ai$ values. After this delivery, $p_r$ assigns $ai$ to $LastDeliveredIndex$, resets its packet set and restarts accumulating packets, $\langle ai^\prime, lbl^\prime, dat^\prime \rangle$, for which $LastDeliveredIndex \neq ai^\prime$.

			
		\end{framed}
		\BBB
		
		\caption{\label{fig:smplalg}An \ems{ARQ algorithm} (first attempt)}

	\end{smaller}
	
	\BB\BB\BB
\end{figure*}


Let us consider the set of packets $X=\{ \langle ai, \ell, dat \rangle \}_{\ell \in [1, 2 \cdot capacity+1]}$ that $p_r$ receives during a legal execution, where $ai=0$, as in the example that appears in Figure~\ref{fig:smplalg}. We note that $X$ includes a majority of packets in $X$ that have the same value of $dat$, because the channel can add at most $capacity$ packets (due to channel faults, such as message duplication, or transient faults that occurred before the starting configuration), and thus $p_s$ has sent at least $(capacity+1)$ of these packets, i.e., the majority of the arriving packets to $p_r$ have originated from $p_s$, rather than the communication channel between $p_s$ and $p_r$ (due to channel faults or transient faults that occurred before the starting configuration). The protocol tolerates channel reordering faults, because the Sender fetches one message at a time, and since it does not fetch another before it receives an acknowledgment about the delivery of the current one. The protocol marks each packet with a distinct label in order to allow a packet selection that is based on majority in the presence of duplication faults.

The correctness proof considers: (1) asynchronous executions (while assuming fair communication but not execution fairness), and (2) the fact that the Receiver always acknowledges incoming packets. These acknowledgments can eventually arrive to the Sender, and hence the Sender fetches messages, $m = \langle dat \rangle$, repeatedly. Following $p_s$'s first fetch, $p_r$ can eventually receives $p_s$'s packets, $\langle ai, lbl, dat \rangle$, adopts  $p_s$'s alternating index, $ai=AltIndex$, records it in $LastDeliveredIndex$, delivers and acknowledges $m$, see the example in Figure~\ref{fig:smplalg}. We consider an execution in which $p_s$ changes its alternating index, $AltIndex$, as follows: $x$, $x+1$, $x+2$, $x$, $\ldots$ (where $x \in [0,2]$ and all operations are in modulo $3$). During this execution, $p_s$ receives acknowledgments that imply that $p_r$ changed $LastDeliveredIndex$ to $x+1$ and then to $x+2$. Moreover, the proof shows that between the acknowledgments with $LastDeliveredIndex=x+1$ and the acknowledgments with $LastDeliveredIndex=x+2$, the Sender does not send packets with alternating index $x$. Thus, $m$'s last delivery, with alternating index equals $x+1$, has to follow the reception of at least $(capacity+1)$, out of $(2\cdot capacity+1)$, distinctly labeled packets, $pckt=\langle x+1, \ast, dat \rangle$, in the sequence. This must be due to $m$'s (sending-side) fetch, $p_s$ transmission of $m$'s packets, $pckt=\langle x+1, \ast, dat \rangle$, from $p_s$ to $p_r$, and $m$'s (receiving-side) delivery.

\ems{The first attempt (Figure~\ref{fig:smplalg}) considers messages from the application layer, such that each message has a length of $ml$-bits. We note that by adding $t$ check symbols to the packet dataload, a Reed–Solomon code~\cite{reed1960polynomial} can correct up to $\lfloor t/2\rfloor$  erroneous symbols at unknown locations. One can calculate the value of $ml$ considers $n$ packets (each of $pl$-bits length) and the network capacity measured in the number of packets (of $pl$-bits length). If the capacity is no more than $t$ packets, there is a need to send $n=ml+2t+1$ packets to reconstruct the $ml$ packets. One can consider a method in which packets include the actual messages instead of a portion of every message (possibly using systematic codes). Note that also in this case $2t+1$ additional packets are needed to deal with $t$ packets (of $pl$-bits length) network capacity.}

The above first-attempt solution delivers each message exactly once in its (sending-order) while producing a large communication \ems{cost}. The proposed solution (Section~\ref{s:alg}) uses error correction codes and has a smaller \ems{communication cost}. It fetches a number of messages, $m$, on the sending-side. Then, it concurrently transmits them to the other end after transforming them into packets that are protected by error correction codes, and then delivering them at their sending order without omission, or duplication. We explain how to circumvent the difficulty that the communication channel can introduce up to capacity erroneous packets by considering the problem of having up to capacity erroneous bits in any packet.


\begin{figure*}
	\begin{smaller}
		
		\BBB\BBB\BB\B
		\begin{framed}
			\begin{center}
				\begin{minipage}{1.0\textwidth}
					\setlength{\unitlength}{1500sp}%
					\begingroup\makeatletter\ifx\SetFigFont\undefined%
					\gdef\SetFigFont#1#2#3#4#5{%
						\reset@font\fontsize{#1}{#2pt}%
						\fontfamily{#3}\fontseries{#4}\fontshape{#5}%
						\selectfont}%
					\fi\endgroup%
					\begin{center}
						
						\begin{picture}(9498,5934)(9841,-3695)
						
						\thinlines
						
						
						
						{\put(10050,-466){\line( 1, 0){2925}}
						}%
						{\put(10050,209){\line( 1, 0){2925}}
						}%
						
						{\multiput(10050,-1220)(0.00000,-90.47826){12}{\line( 0,-1){35.739}}
						}%
						{\multiput(12971,-1220)(0.00000,-90.39130){12}{\line( 0,-1){35.696}}
						}%

						{\put(10050,-2311){\line( 1, 0){2925}}
							\put(12991,-2311){\line( 0,-1){675}}
							\put(12991,-2986){\line(-1, 0){2925}}
							\put(10050,-2986){\line( 0, 1){675}}
						}%

						{\put(10050,-1186){\framebox(2925,2025){}}
						}%
						
						{\put(15080,-1186){\framebox(4230,2025){}}
						}%

						{\put(15080,-2986){\framebox(4230,675){}}
						}%
						{\put(15080,-466){\line( 1, 0){4230}}
						}%
						{\put(15080,209){\line( 1, 0){4230}}
						}%
						{}%
						{\put(9990,1964){\vector(-1, 0){  0}}
							\put(15076,1964){\vector( 1, 0){4230}}
						}%
						{\put(15100,1964){\vector(-1, 0){  0}}
							\put(10126,1964){\vector( 1, 0){2925}}
						}%
						{\put(15100,1964){\vector(-1, 0){  0}}
							\put(10126,1964){\vector( 1, 0){2925}}
						}%
						\thicklines
						\linethickness{0.5mm}
						{\put(15076,1739){\line( 0,-1){2925}}
						}%
						{\put(15076,1739){\line( 1, 0){675}}
						}%
						{\put(15076,-2311){\line( 0,-1){675}}
							\put(15076,-2986){\line( 1, 0){675}}
						}%

						{\put(19306,-1186){\line( 0, 1){2925}}
							\put(19306,1739){\line(-1, 0){630}}
							\put(18676,1739){\line( 0,-1){2925}}
						}%
						{\put(18676,-2311){\line( 0,-1){675}}
							\put(18676,-2986){\line( 1, 0){630}}
							\put(19306,-2986){\line( 0, 1){675}}
						}%
						{\put(15751,1739){\line( 0,-1){2925}}
						}%
						{\put(15751,1739){\line( 1, 0){675}}
						}%
						{\put(16426,1739){\line( 0,-1){2925}}
						}%
						{\put(15751,-2311){\line( 0,-1){675}}
							\put(15751,-2986){\line( 1, 0){675}}
							\put(16426,-2986){\line( 0, 1){675}}
						}%
						{\multiput(15751,-1186)(0.00000,-250.00000){5}{\line( 0,-1){125.000}}
						}%
						{\multiput(15076,-1186)(0.00000,-250.00000){5}{\line( 0,-1){125.000}}
						}%
						{\multiput(16426,-1186)(0.00000,-250.00000){5}{\line( 0,-1){125.000}}
						}%
						{\multiput(19306,-1186)(0.00000,-250.00000){5}{\line( 0,-1){125.000}}
						}%
						{\multiput(18676,-1186)(0.00000,-250.00000){5}{\line( 0,-1){125.000}}
						}%
						\thinlines
						{\put(16066,-3481){\vector( 0, 1){495}}
						}%
						{\put(18991,-3481){\vector( 0, 1){495}}
						}%
						{\put(15391,-3481){\vector( 0, 1){495}}
						}%
						{\put(18721,1289){\line( 1, 0){585}}
						}%
						{\put(15076,1289){\line( 1, 0){1350}}
						}%

						\linethickness{0.4mm}
						{\multiput(16400,1730)(114.39252,0.00000){20}{\line( 1, 0){ 37.196}}
						}%

						\put(9650,-2581){\makebox(0,0)[lb]{\smash{{\SetFigFont{8}{16.8}{\rmdefault}{\mddefault}{\itdefault}{pl}%
						}}}}
						\put(9780,-736){\makebox(0,0)[lb]{\smash{{\SetFigFont{8}{16.8}{\rmdefault}{\mddefault}{\updefault}{3}%
						}}}}
						\put(9780,-61){\makebox(0,0)[lb]{\smash{{\SetFigFont{8}{16.8}{\rmdefault}{\mddefault}{\updefault}{2}%
						}}}}
						\put(9780,614){\makebox(0,0)[lb]{\smash{{\SetFigFont{8}{16.8}{\rmdefault}{\mddefault}{\updefault}{1}%
						}}}}
						\put(18875,1469){\makebox(0,0)[lb]{\smash{{\SetFigFont{8}{16.8}{\rmdefault}{\mddefault}{\itdefault}{n}%
						}}}}
						\put(18831,1019){\makebox(0,0)[lb]{\smash{{\SetFigFont{8}{16.8}{\rmdefault}{\mddefault}{\itdefault}{ai}%
						}}}}
						\put(15286,1019){\makebox(0,0)[lb]{\smash{{\SetFigFont{8}{16.8}{\rmdefault}{\mddefault}{\itdefault}{ai}%
						}}}}
						\put(16000,1019){\makebox(0,0)[lb]{\smash{{\SetFigFont{8}{16.8}{\rmdefault}{\mddefault}{\itdefault}{ai}%
						}}}}
						\put(15300,1469){\makebox(0,0)[lb]{\smash{{\SetFigFont{8}{16.8}{\rmdefault}{\mddefault}{\updefault}{1}%
						}}}}
						\put(15975,1469){\makebox(0,0)[lb]{\smash{{\SetFigFont{8}{16.8}{\rmdefault}{\mddefault}{\updefault}{2}%
						}}}}
						\put(11476,2009){\makebox(0,0)[lb]{\smash{{\SetFigFont{8}{16.8}{\rmdefault}{\mddefault}{\itdefault}{ml}%
						}}}}
						\put(17100,2054){\makebox(0,0)[lb]{\smash{{\SetFigFont{8}{16.8}{\rmdefault}{\mddefault}{\itdefault}{n $>$ ml}%
						}}}}
						\put(14550,1469){\makebox(0,0)[lb]{\smash{{\SetFigFont{6}{16.8}{\rmdefault}{\mddefault}{\itdefault}{lbl}%
						}}}}
						\put(13800,1064){\makebox(0,0)[lb]{\smash{{\SetFigFont{6}{16.8}{\rmdefault}{\mddefault}{\itdefault}{AltIndex}%
						}}}}
						\put(13200,-1300){\makebox(0,0)[lb]{\smash{{\SetFigFont{8}{16.8}{\rmdefault}{\mddefault}{\updefault}{Error}%
						}}}}
						\put(13200,-1600){\makebox(0,0)[lb]{\smash{{\SetFigFont{8}{16.8}{\rmdefault}{\mddefault}{\updefault}{Correcting}%
						}}}}
						\put(13200,-1900){\makebox(0,0)[lb]{\smash{{\SetFigFont{8}{16.8}{\rmdefault}{\mddefault}{\updefault}{Encoding}%
						}}}}
						\put(14300,-3700){\makebox(0,0)[lb]{\smash{{\SetFigFont{6}{14.4}{\rmdefault}{\mddefault}{\updefault}{I$^{st}$ Packet}%
						}}}}
						\put(15796,-3700){\makebox(0,0)[lb]{\smash{{\SetFigFont{6}{14.4}{\rmdefault}{\mddefault}{\updefault}{II$^{ed}$ Packet}%
						}}}}
						\put(18300,-3700){\makebox(0,0)[lb]{\smash{{\SetFigFont{6}{14.4}{\rmdefault}{\mddefault}{\updefault}{$n^{th}$ Packet}%
						}}}}
						\put(13000,-2761){\makebox(0,0)[lb]{\smash{{\SetFigFont{6}{14.4}{\rmdefault}{\mddefault}{\updefault}{$pl^{th}$ Message}%
						}}}}
						\put(13000,-286){\makebox(0,0)[lb]{\smash{{\SetFigFont{6}{14.4}{\rmdefault}{\mddefault}{\updefault}{II$^{ed}$ Message}%
						}}}}
						\put(13000,389){\makebox(0,0)[lb]{\smash{{\SetFigFont{6}{14.4}{\rmdefault}{\mddefault}{\updefault}{I$^{st}$ Message}%
						}}}}
						
						\thicklines
						\linethickness{0.5mm}
						\put(13500,-905){\line( 1, 0){451}}
						\put(14100,-905){\vector( 1, 0){0}}
						
						\BBB\BBB\BBB\BBB\BBB\BBB\BBB\BBB\BBB\BBB
						\end{picture}
						\BBB\BBB\BBB\BBB\BBB\BBB\BBB\BBB\BBB\BBB
					\end{center}
					\BBB\BBB\BBB\BBB\BBB\BBB\BBB\BBB\BBB\BBB
					
					\BBB\BBB\BBB\BBB\BBB\BBB\BBB\BBB\BBB\BBB
				\end{minipage}
			\end{center}

			
			The (sending-side) encoder considers a batch of (same length) messages as a (bit) matrix, where each message (bit representation) is a matrix row. It transposes these matrices by sending the matrix columns as encoded data packets. Namely, the Sender fetches, $[m_j]_{j \in [1,pl]}$, a batch of $pl$ messages from the application layer each of length $ml$-bits, and calls the function $Encode([m_j]_{j \in [1,pl]})$. This function is based on an error correction code, $ecc$, that for $ml$-bits word, $m_j$, codes an $n$-bits words, $cm_j$, such that $cm_j$ can bear up to $capacity$ erroneous bits, i.e., $ecc$'s error threshold, $t_{ecc}$, is $capacity$. The function then takes these $pl$ (length $n$-bits) words, $cm_j$, and returns $n$ (length $pl$-bits) packet payloads, $[pyld_k]_{k \in [1,pl]}$, that are the columns of a bit matrix in which the $j$th row is $cm_j$'s bit representation, see the image above for illustration.
			
			The (receiver-side) uses the function $Decode([pyld^\prime_k]_{k \in [1,pl]})$, which requires $n$ packet payloads, and assumes that at most $capacity$ of them are erroneous packets that appeared in channel (due to transient faults that occurred before the starting configuration) rather than added to the channel by the Sender, or due to channel faults during the system execution, i.e., from $[pyld_k]_{k \in [1,pl]}$.
			%
			This function first transposes the arrived packet payloads, $[pyld^\prime_k]_{k \in [1,pl]}$ (in a similar manner to $Encode()$), before using $ecc$ for decoding $[m_j]_{j \in [1,pl]}$ and delivering the original messages to the Receiver's application layer.
			
			Namely when the Receiver accumulates $n$ distinct label packets, $capacity$ of the packets may be wrong or unrelated. However, since the $i^{th}$ packet, out of the $n$ distinctly labeled packets, encodes the $i^{th}$-bits of all the $pl$ encoded messages, if the $i^{th}$ packet is wrong, the decoder can still decode the data of the original $pl$ messages each of length $ml<n$. The $i^{th}$ bit in each encoded message may be wrong, in fact, capacity of packets maybe wrong yielding capacity of bits that may be wrong in each encoded message. However, due to the error correction, all the original $pl$ messages of length $ml$ can be recovered, so the Receiver can deliver the correct $pl$ messages in the correct order. Note that in this case, although the channel may reorder the packets, the labels maintain the sending-order, because the $i^{th}$ packet is labeled with $i$. In this proposed solution, the labels also facilitate duplication fault-tolerance, because the Receiver always holds at most one packet with label $i$, i.e., the latest.


			\vspace*{-\smallskipamount}
			\vspace*{-\smallskipamount}
			\vspace*{-\smallskipamount}
		\end{framed}
		\BBB
		
		\caption{\label{fig:pfftmis}Packet formation from messages}
		\BB\BB\BB
		
	\end{smaller}
\end{figure*}


\Subsection{Error correction codes for payload sequences}
%
Error correction codes~\cite{Moon} can mitigate bit-flip errors in (binary) words, where such words can represent payload in single data packet, or as we show here, can be used to help recover wrong words in a sequence of them. These methods use redundant information when encoding data, so that after the error occurrence, the decoding procedure will be able to recover the originally encoded data without errors. Namely, an error correction code $ec()$ encodes a payload $w$ (binary word) of length $wl$ with payload $c=ec(w)$ of length $cl$, where $cl > wl$. The payload $w$ can be later restored from $c^\prime$ as long as the Hamming distance between $c^\prime$ and $c$ is less than a known {\em error threshold}, $t_{ecc}$, where the Hamming distance between $c^\prime$ and $c$ is the smallest number of bits that one has to flip in $c$, in order to get $c^\prime$. 

Existing methods for error correction codes\ems{, such as Reed–Solomon error correction~\cite{reed1960polynomial},} can also be used for a sequence of packets and their payloads, see Figure~\ref{fig:pfftmis}. These sequences are encoded on the sender-side, and sent over a bounded capacity, omitting, duplicating and non-FIFO channel, before decoding them on the receiver-side. On that side, the originally encoded payload sequence is decoded, as long as the error threshold is not smaller than the channel capacity, i.e., $t_{ecc} \geq capacity$.
%
%
This method circumvents the issue of having up to $capacity$ erroneous packets by considering the problem of having up to $capacity$ erroneous bits in any packet. This problem is solved by using error correction codes to mask the erroneous bits. The proposed solution allows correct message delivery even though up to $capacity$ of packets are erroneous, i.e., packets that appeared in channel (due to transient faults that occurred before the starting configuration) rather than added to the channel by Sender, or due to channel faults during the system execution.

\Section{Self-Stabilizing \ems{Automatic Repeat reQuest (ARQ)} Algorithm}
\label{s:alg}
We propose an efficient $S^2ARQ$ algorithm that fetches a number of messages, $m$, and encodes them according to the method presented in Figure~\ref{fig:pfftmis}. The Sender then concurrently transmits $m$'s encoded packets to the receiving end until it can decode and acknowledge $m$. Recall that the proposed method for error correction can tolerate communication channels that, while in transit, omit, duplicate and reorder $m$'s packets, as well as add up to $capacity$ packets to the channel (due to transient faults that have occurred before the starting configuration rather than packets that the Sender adds to the channel during the system execution). We show how the Sender and the Receiver can use the proposed error correction method for transmitting and acknowledging $m$'s packets.

Reliable, ordered, and error-checked protocols in the transport layer, such as TCP/IP, often consider the delivery of a stream of octets between two network ends. The algorithm presented in Figure~\ref{fig:pfftmis} considers a transport layer protocol that repeatedly fetches another part of the stream. Upon each such fetch, the protocol breaks that part of the stream into $m$ sub-parts, and the protocol refers to $m$ sub-parts as the application layer messages. Note that the size of each such message can be determined by the (maximal) payload size of packets that the Sender transmits to the Receiver, because the payload of each transmitted packet needs to accommodate one of the $n$-bits words that are the result of transposing $m$ stream sub-parts. 
%

%

The $S^2ARQ$ algorithm extends the first attempt ARQ algorithm (Figure~\ref{fig:smplalg}), i.e., the Sender, $p_s$, transmits the packets $\langle ai, lbl, dat \rangle$, and the Receiver, $p_r$, acknowledges using the $\langle ai, lbl \rangle$ packets, where $ai \in [0,2]$ is the state alternating index, and $lbl$ are packet labels that are distinct among all of the packets that are associated with messages $m = \langle dat \rangle$. Moreover, it uses the notation of the proposed error correction method (Figure~\ref{fig:pfftmis}), i.e., the Sender fetches batches of $pl$ application layer messages of length $ml$-bits that are coded by $n$-bit payloads that tolerate up to $capacity$ erroneous bits. The Sender, $p_s$, fetches $pl$ (application layer) messages, $m = \langle dat \rangle$, encodes them into $n$ (distinctly labeled) packets, $\langle ai=AltIndex_s, lbl, dat \rangle$, according to the proposed error correction method (Figure~\ref{fig:pfftmis}), and repeatedly transmits these $n$ (distinctly labeled) packets to $p_r$ until $p_s$ receives from $p_r$ (at least) $(capacity+1)$ (distinctly labeled) acknowledgment packets $\langle ldai^\prime, lbl^\prime \rangle$, for which after convergence $ldai^\prime=AltIndex$. The Receiver repeatedly transmits the acknowledgment packets $\langle ldai^\prime, lbl^\prime \rangle$, which acknowledge the messages in the previous batch that it had delivered to its application layer that had the alternating index, $ai = LastDeliveredIndex$.
%
%
Note that the Receiver repeatedly sends  $(capacity+1)$ acknowledgment packets, as a response to the $n$ received packets, rather than a particular packet that has arrived.
Namely, the Receiver $p_r$ accumulates arriving packets, $\langle ai, lbl, dat \rangle$, whose alternating indexes, $ai$, is different from the last delivered one, $LastDeliveredIndex$. Moreover, once $p_r$ has $n$ (distinctly labeled) packets, which are $\{\langle ai, \ell, dat \rangle\}_{\ell \in [1,n]} :$ $ai$ $\neq$ $LastDeliveredIndex$, the Receiver $p_r$ updates $LastDeliveredIndex$ according to $ai$, as well as use the proposed error correction method (Figure~\ref{fig:pfftmis}) for decoding $m$ before delivering it.

Note that $p_s$ transmits to $p_r$ a set of $n$ (distinctly labeled with respect to a single alternating index) of $m$'s packets, that is, $m$'s packets, which the channel can omit, duplicate and reorder. Thus, once $p_r$ receives $n$ packets, $p_r$ can use the proposed error correction method (Figure~\ref{fig:pfftmis}) as long as their alternating index is different from the last delivered one, $LastDeliveredIndex$, because at least $(n-capacity)$ of these packets were sent by $p_s$. Similarly, $p_r$ transmits to $p_s$ a set of $(capacity + 1)$ (distinctly labeled with respect to a single alternating index) of $m$'s acknowledgment packets. Thus, once $(capacity + 1)$ of $m$'s acknowledgment packets (with $ldai$ matching to $AltIndex$) arrive at the sending-side, $p_s$ can conclude that at least one of them was transmitted by $p_r$, as long as their alternating index, $ldai$, is the same as the one used for $m$, $AltIndex$.

%
The correctness arguments show that eventually the system can reach an execution in which the Sender fetches a new message batch infinitely often, and the Receiver will deliver the messages fetched by the Sender before its fetches the next message batch. Thus, every batch of $pl$ fetched messages is delivered exactly once, because after delivery the Receiver resets its packet set and changes its $LastDeliveredIndex$ to be equal to the alternating index of the Sender. The Receiver stops accumulating packets from the Sender (that their alternating index is $LastDeliveredIndex$) until the Sender fetches the next message batch, and starts sending packets with a new alternating index. Note that the Sender only fetches new messages after it gets $(capacity + 1)$ distinctly labeled acknowledgments, $\langle ldai, lbl \rangle$ (that their alternating index, $ldai$, equals to $p_s$'s $AltIndex$). When the Receiver holds $n$ (distinctly labeled) packets out of which at most $capacity$ are erroneous ones, it can convert the packets back to the original messages, see (Figure~\ref{fig:pfftmis}).

\LinesNotNumbered \begin{algorithm*}[t!]
	\begin{smaller}
		
		{\bf Local variables:}
		
		$AltIndex \in [0,2]:$ state the current alternating index value
		
		$ACK\_set$: at most $(capacity + 1)$ acknowledgment set, where items contain \ems{last delivered alternating indexes and labels,} $\langle ldai, lbl \rangle$
		

		~\\
		
		{\bf Interfaces:}
		
		$Fetch(NumOfMessages)$ fetches $NumOfMessages$ messages from the application and returns them in an array of size $NumOfMessages$ according to their original order
		
		$Encode(Messages[])$ receives an array of messages of length $ml$ each, $M$, and returns a message array of identical size $M^\prime$, where message $M^\prime[i]$ is the encoded original $M[i]$, the final length of the returned $M^\prime[i]$ is $n$ and the code can bare $capacity$ mistakes
		
		~\\
		
		\nl{\bf Function $packet\_set()$} \nllabel{ln:Senderpacketsetfunction}
		\Begin{
			
			\nl		\lForEach {$(i, j) \in [1,n] \times [1,pl] $}{{\bf let} $data[i].bit[j] = messages[j].bit[i]$} \nllabel{ln:Sendercreatepacket}
			
			\nl		{\bf return} $\{ \langle AltIndex, i, data[i]  \rangle \}_{i \in [1,n]}$ 	
			
		}
		
		\nl {\bf Do forever} \nllabel{ln:SenderDoForever}
		\Begin
		{
			
			\nl \If{$( \{ AltIndex \} \times [1, capacity + 1 ]) \subseteq ACK\_set$} 
			{$(AltIndex, ACK\_set, messages) \gets ((AltIndex +1) \bmod 3, \emptyset, Encode(Fetch(pl)))$} \nllabel{ln:Senderencodefetch}
			
			\nl	\lForEach {$pckt \in packet\_set()$}{{\bf send} $pckt$} \nllabel{ln:Sendersend} 
		}
		\nl	{\bf Upon receiving $ACK = \langle ldai, lbl \rangle$} \nllabel{ln:Senderget}
		\Begin
		{
			
			\nl		\If{$ ldai= AltIndex \wedge lbl \in [1, capacity +1] $}{ \nllabel{ln:Sendergetsack}
				
				\nl			$ACK\_set \gets ACK\_set\cup \{ACK\}$ \nllabel{ln:Senderaddack}

				
				
				
			}
		}

		\caption{\label{alg:sender}Self-Stabilizing ARQ Algorithm (Sender $p_s$)}
		
	\end{smaller}
	
\end{algorithm*}

\Subsection{Detailed description}
The pseudo-code in Algorithms~\ref{alg:sender} and~\ref{alg:receiver} implements the proposed $S^2ARQ$ algorithm from the sender-side, and respectively, receiver-side. The two nodes, $p_s$ and $p_r$, are the Sender and the Receiver nodes respectively. The Sender algorithm consists of a do forever loop statement (lines~\ref{ln:SenderDoForever} to~\ref{ln:Senderencodefetch} of the Sender algorithm), where the Sender, $p_s$, assures that all the data structures comprises only valid contents. Namely, $p_s$ checks that the $ACK\_set_s$ holds packets with alternating index equal to the Sender's current $AltIndex_s$ and the labels are between $1$ and $(capacity + 1)$.


\begin{algorithm*}[t!]
	\begin{smaller}

		{\bf Local variables:}
		
		$LastDeliveredIndex \in [0,2]$: the alternating index value of the last delivered packets
		
		$packet\_set$: packets, $\langle ai, lbl, dat \rangle$, received, where $lbl \in [1,n]$ and $dat$ is data of size $pl$-bits

		~\\
		
		{\bf Interfaces:}
		
		$Decode(Messages[])$ receives an array of encoded messages, $M^\prime$, of length $n$ each, and returns an array of decoded messages of length $ml$, $M$, where $M[i]$ is the decoded $M^\prime[i]$. The code is the same error correction coded by the Sender and can correct up to $capacity$ mistakes
		
		$Deliver(messages[])$ receives an array of messages and delivers them to the application by the order in the array
		
		~\\
		
		{\bf Macro:}

		$ index(ind)=\{ \langle ind, \ast, \ast \rangle \in packet\_set \}$
		
		~\\
		
		\nl {\bf Do forever}
		\Begin
		{
			\nl \lIf{$\{ \langle ai, lbl \rangle : \langle ai, lbl, \ast \rangle \in
				packet\_set \} \not \subseteq \{ [0,2] \setminus \{ LastDeliveredIndex \} \} \times [1,n] \times \{ \ast \}  \vee$
				$(\exists \langle ai, lbl, dat \rangle \in packet\_set : \langle ai, lbl, \ast \rangle \in packet\_set \setminus \{ \langle ai, lbl, dat \rangle \}) \vee$
				$(\exists pckt = \langle \ast, \ast, data \rangle \in packet\_set : | pckt.data | \not = pl) \vee$ $1$ $<$ $| \{$ $AltIndex: n$ $\leq$ $| \{$ $\langle AltIndex,\ast,\ast \rangle  \in packet\_set \} |  \} | $}{$packet\_set \gets \emptyset$} \nllabel{ln:receiverdoforever}
			
			\nl              \If{$\exists~!~ind: ind \neq LastDeliveredIndex$ $\wedge$ $n \leq$ $| index(ind) |$}{ \nllabel{ln:receivergetsexactone}
				
				\nl 	\nllabel{ln:RmessagegenerationForEach}  \ForEach {$(i, j) \in [1,pl] \times [1,n] $}{
					
					\nl {\bf let}  $messages[i].bit[j]$ $=$ $data.bit[i] :\langle ind, j, data \rangle$  $\in$ $index(ind)$ \nllabel{ln:Rmessagegeneration}
					
				}				
				
				\nl    $(packet\_set, LastDeliveredIndex)  \gets (\emptyset, ind )$ \nllabel{ln:receiverchengeLDI}
				
				\nl    $Deliver$($Decode$($messages$)) \nllabel{ln:receiverdecodedeliver}
				
			}

			\nl \lForEach {$i \in [1,capacity+1]  $}{{\bf send} $\langle LastDeliveredIndex, i \rangle$} \nllabel{ln:receiversendACK}
			
		}
		\nl 	{\bf Upon receiving $pckt = \langle ai, lbl, dat \rangle$} \nllabel{ln:receivercheckpacketsetevent}
		\Begin
		{
			\nl 		\If{$\langle ai, lbl, \ast \rangle$ $\not \in$ $packet\_set$ $\wedge$ $ \langle ai, lbl \rangle$ $\in$ $(\{ [0,2]$ $\setminus$ $\{ LastDeliveredIndex \} \}$ $\times$ $[1,n])$ $\wedge$  $| dat |$ $=$ $pl$ }{ \nllabel{ln:receivercheckpacketset}
				\nl 		$packet\_set \gets packet\_set\cup \{pckt\}$ \nllabel{ln:recieverreceivingpacketset}

			}
			
		}
		
		\caption{\label{alg:receiver}Self-Stabilizing ARQ Algorithm (Receiver $p_r$)}
		
	\end{smaller}
	
\end{algorithm*}

In case any of these conditions is unfulfilled, the Sender resets its data structures (line~\ref{ln:Senderencodefetch} of the Sender algorithm). Subsequently, $p_s$ triggers the \textit{Fetch} and the \textit{Encode} interfaces (line~\ref{ln:Senderencodefetch} of the Sender algorithm). Before sending the packets, $p_s$ executes the $packet\_set()$ function (line~\ref{ln:Sendersend} of the Sender algorithm).

The Sender algorithm, also, handles the reception of acknowledgments $ACK_s = \langle ldai, lbl \rangle$ (line~\ref{ln:Senderget} of the Sender algorithm). Each packet has a distinct label with respect $m$'s message batch.  If $ACK_s = \langle ldai, lbl \rangle$ has the value of $ldai$ (last delivered alternating index) equals to $AltIndex$ (line~\ref{ln:Sendergetsack} of the Sender algorithm), the Sender $p_s$ stores $ACK_s$ in $ACK\_set_s$ (line~\ref{ln:Senderaddack} of the Sender algorithm). When $p_s$ gets such (distinctly labeled) packets $(capacity + 1)$ times, $p_s$ changes $AltIndex_s$, resets $ACK\_set_s$, and calls $Fetch()$ and $Encode()$ interfaces (line~\ref{ln:Senderencodefetch} of the Sender algorithm).

The Receiver algorithm executes at the Receiver side, $p_r$. The Receiver $p_r$ repeatedly tests $packet\_set_r$ (line~\ref{ln:receiverdoforever} of the Receiver algorithm), and assures that: (\textit{i}) $packet\_set_r$ holds packets with alternating index, $ai \in [0,2]$, except $LastDeliveredIndex_r$, labels ($lbl$) between $1$ and $n$ and data of size $pl$, and (\textit{ii}) $packet\_set_r$ holds at most one group of $ai$ that has (distinctly labeled) $n$ packets. When any of the aforementioned conditions do not hold, $p_r$ assigns the empty set to $packet\_set_r$.
When $p_r$ discovers that it has $n$ distinct label packets of identical $ai$ (line~\ref{ln:receivergetsexactone} of the Receiver algorithm), $p_r$ decodes the payloads of the arriving packets (line~\ref{ln:Rmessagegeneration} of the Receiver algorithm). Subsequent steps include the reset of $packet\_set_r$ and change of $LastDeliveredIndex_r$ to $ai$ (line~\ref{ln:receiverchengeLDI} of the Receiver algorithm). Next, $p_r$ delivers the decoded message (line~\ref{ln:receiverdecodedeliver} of the Receiver algorithm).
In addition, $p_r$ repeatedly acknowledges $p_s$ by $(capacity+1)$ packets (line~\ref{ln:receiversendACK} of the Receiver algorithm).


Node $p_r$ receives a packet $pckt_r = \langle ai, lbl, dat \rangle$, see line~\ref{ln:receivercheckpacketsetevent} (the Receiver algorithm). If $pckt_r$ has data ($dat$) of size $pl$-bits, an alternating index ($ai$) in the range of $0$ to $2$, excluding the $LastDeliveredIndex$, and a label ($lbl$) in the range of $1$ to $n$ (line~\ref{ln:receivercheckpacketset} of the Receiver algorithm), $p_r$ puts $pckt_r$ in $packet\_set_r$ (line~\ref{ln:recieverreceivingpacketset} of the Receiver algorithm).

\Section{Correctness}
\label{s:cor}
%
%
We define the set of legal executions, and how they implement the $S^2ARQ$ task (Section~\ref{s:sys}), before demonstrating that the Sender and the Receiver algorithms implement that task (theorems~\ref{th:closure} and~\ref{th:selfstab}). \ems{Our analysis considers Lamport's happened-before relation when demonstrating stabilization (Section~\ref{sec:timeComplexity}).}  

Given a system execution, $R$, and a pair, $p_s$ and $p_r$, of sending and receiving nodes, the $S^2ARQ$ task associates $p_s$'s sending message sequence $s_R= {{im}_{0}},$ ${{im}_{1}},$ ${{im}_{2}},$ $\ldots,$ ${{im}_{\ell}},$ $\ldots$, with $p_r$ delivered message sequence $r_R$ $=$ ${{om}}_{0},$ ${{om}_{1}},$ ${{om}_{2}},$ $\ldots,$ ${{om}_{\ell^\prime}},$ $\ldots$, see Section~\ref{s:sys}. The Sender algorithm encodes batch of messages, ${{im}_{\ell}}$, using an error correction method (Figure~\ref{fig:pfftmis}) into a packet sequence, $I$, that tolerates up to $capacity$ wrong packets (the Sender algorithm, line~\ref{ln:Senderencodefetch}). The Receiver decodes messages, ${{om}_{\ell^\prime}}$, from a packet sequence, $O$ (Receiver algorithm, line~\ref{ln:receiverdecodedeliver}), where every $n$ consecutive packets may have up to $capacity$ packets that were received due to channel faults rather than $p_s$ transmissions. Therefore, our definition of legal execution considers an unbounded suffix of input packets queue, $I = (im_x, im_{x+1},\ldots)$, which $p_s$ sends to $p_r$, and a $k$, such that the packet output suffix starts following the first $k-1$ packets, $O = (om_k, om_{k+1},\ldots)$, is always a prefix of $I$. Furthermore, a new packet is included in $O$ infinitely often.

\Subsection{Basic facts}
Throughout this section, we refer to $R$ as an execution of the Sender and the Receiver algorithms, where $p_s$ executes the Sender algorithm and $p_r$ executes the Receiver algorithm. Let $a_{s_\alpha}$ be the $\alpha^{th}$ time that the Sender is fetching a new message batch, i.e., executes line~\ref{ln:Senderencodefetch} (the Sender algorithm). Let $a_{r_\beta}$ be the $\beta^{th}$ time that the Receiver is delivering a message batch, i.e.,  executing line~\ref{ln:receiverdecodedeliver} (the Receiver algorithm). 

\remove{

\begin{theorem}[Liveness]
	\label{th:liveness}
	For every nice execution $R$, there exists an $R$'s prefix, $R^\prime$, that has $\BigO(n)$ [[@@ asynchronous rounds, @@]] and it includes at least one $a_{s_\alpha}$ step and least one $a_{r_\beta}$ step, where $n$ is the packet word length.
\end{theorem}

\begin{proof}
	By line~\ref{ln:Sendersend} (the Sender algorithm) and line~\ref{ln:receiversendACK} (the Receiver algorithm), node $p_i$ sends packets infinitely often to node $p_j$, where $i \in \{s,r\}$ and $j \in \{s,r\} \setminus \{ i \}$. Our system settings assume that when node $p_i$ sends a packet infinitely often to $p_j$ through the communication channel, node $p_j$ receives that packet infinitely often. This implies that within $\BigO(n)$ [[@@ asynchronous rounds @@]], the Receiver, $p_r$, receives all the $n$ packets in $packet\_set_s()$ and stores them all in $packet\_set_r$, cf. line~\ref{ln:recieverreceivingpacketset}, and thus the condition in line~\ref{ln:receivergetsexactone} (the Receiver algorithm) is satisfied. Therefore, $R^\prime$ includes at least one $a_{r_\beta}$ step. Moreover, the same argument implies that within $\BigO(n)$ [[@@ asynchronous rounds @@]], the Sender, $p_s$, receives $(capacity +1)$ acknowledgments from $p_r$ and stores them in the set $ACK\_set_s$, cf. line~\ref{ln:Senderaddack}, and thus the condition in line~\ref{ln:Senderencodefetch} (the Sender algorithm) is satisfied, where $capacity<n$. Therefore, $R^\prime$ includes at least one $a_{s_\alpha}$ step.
\end{proof}

} 

Lemmas~\ref{th:senderconvergence},~\ref{th:receiverconvergence} and~\ref{th:senderinter} are needed for the proof of Theorem~\ref{th:closure} and Theorem~\ref{th:selfstab}.

\begin{lemma}
	\label{th:senderconvergence}
	Let $c_{s_\alpha}(x)$ be the $x^{th}$ configuration between $a_{s_\alpha}$ and $a_{s_{\alpha+1}}$\ems{. Also, let} $ACK_\alpha =$ $\{ ack_\alpha(\ell)$ $\}_{\ell \in [1, capacity+1]}$ be a set of acknowledgment packets, where $ack_\alpha(\ell) = \langle s\_index_\alpha, \ell \rangle$.
	
	\begin{enumerate}
		\item
		\label{s1AI}
		For any given $\alpha >0$, there is a single index value, $s\_index_\alpha \in [0,2]$, such that for any $x >0$, it holds that $AltIndex_s = s\_index_\alpha$ in $c_{s_\alpha}(x)$.
		\item
		\label{sAI2LDI}
		Between $a_{s_\alpha}$ and $a_{s_{\alpha+1}}$ there is at least one configuration $c_{r_\beta}$, in which $LastDeliveredIndex_r = s\_index_\alpha$.
		\item
		\label{sMustSend}
		Between $a_{s_\alpha}$ and $a_{s_{\alpha+1}}$, the Sender, $p_s$, receives \ems{via} the channel from $p_r$ to $p_s$, the entire set, $ACK_\alpha$, of acknowledgment packets (each packet at least once), and between (the first) $c_{r_\beta}$ and $a_{s_{\alpha+1}}$ the Receiver must send at least one $ack_\alpha(\ell) \in ACK_\alpha$ packet, which $p_s$ receives, where $c_{r_\beta}$ is defined \ems{at item in~\ref{sAI2LDI} of this list.}
	\end{enumerate}
\end{lemma}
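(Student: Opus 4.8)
The plan is to prove the three items in order, since each relies on the previous one. For item~\ref{s1AI}: by definition, $a_{s_\alpha}$ is the $\alpha^{th}$ execution of line~\ref{ln:Senderencodefetch} of the Sender algorithm, and $a_{s_{\alpha+1}}$ is the next one. The variable $AltIndex_s$ is written only in line~\ref{ln:Senderencodefetch}, so between these two fetch steps its value is never modified. Hence there is a single value $s\_index_\alpha \in [0,2]$ (the range is guaranteed by the modulo-$3$ update in line~\ref{ln:Senderencodefetch}) that $AltIndex_s$ holds in every configuration $c_{s_\alpha}(x)$ with $x>0$. This is essentially immediate from code inspection.

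For item~\ref{sMustSend}: the guard of line~\ref{ln:Senderencodefetch} requires $(\{AltIndex_s\} \times [1,capacity+1]) \subseteq ACK\_set_s$ for the fetch $a_{s_{\alpha+1}}$ to occur. By item~\ref{s1AI}, $AltIndex_s = s\_index_\alpha$ throughout this interval, and $ACK\_set_s$ was reset to $\emptyset$ at $a_{s_\alpha}$. Entries are added to $ACK\_set_s$ only in line~\ref{ln:Senderaddack}, and only upon the reception event (line~\ref{ln:Senderget}) when the guard $ldai = AltIndex_s \wedge lbl \in [1,capacity+1]$ holds (line~\ref{ln:Sendergetsack}). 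Therefore, for the guard of line~\ref{ln:Senderencodefetch} to become satisfied, the Sender must have received, via the channel from $p_r$ to $p_s$, at least one packet $\langle s\_index_\alpha, \ell \rangle$ for every $\ell \in [1,capacity+1]$; that is, the whole set $ACK_\alpha$. Since the channel carries at most $capacity$ packets that did not originate at $p_r$ (bounded capacity plus transient faults before the starting configuration), and $ACK_\alpha$ contains $capacity+1$ distinctly labeled packets, at least one $ack_\alpha(\ell) \in ACK_\alpha$ was actually sent by $p_r$ during the interval $[a_{s_\alpha}, a_{s_{\alpha+1}})$ — and it must have been sent after $p_r$ last set $LastDeliveredIndex_r$ to $s\_index_\alpha$, because $p_r$'s acknowledgments in line~\ref{ln:receiversendACK} carry $LastDeliveredIndex_r$ as the first field. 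This is where item~\ref{sAI2LDI} comes from.

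For item~\ref{sAI2LDI}: combining the above, since at least one genuine $ack_\alpha(\ell) = \langle s\_index_\alpha, \ell \rangle$ is sent by $p_r$ within the interval, and $p_r$ only sends acknowledgments in line~\ref{ln:receiversendACK} with first field equal to its current $LastDeliveredIndex_r$, there must exist a configuration $c_{r_\beta}$ between $a_{s_\alpha}$ and $a_{s_{\alpha+1}}$ in which $LastDeliveredIndex_r = s\_index_\alpha$. The subtle point, which I expect to be the main obstacle, is the bookkeeping of \emph{which} configurations fall ``between $a_{s_\alpha}$ and $a_{s_{\alpha+1}}$'' once packets can be reordered and delayed: a genuine acknowledgment received by $p_s$ in this interval could, a priori, have been sent by $p_r$ before $a_{s_\alpha}$. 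I would rule this out by exploiting that the channel capacity is bounded: the packets $p_r$ could have sent before $a_{s_\alpha}$ with first field $s\_index_\alpha$ number at most $capacity$ of them in transit, so they cannot by themselves fill $ACK\_set_s$ — together with whatever $capacity$ junk packets the channel holds, at most $capacity$ of the $capacity+1$ slots can be populated by pre-$a_{s_\alpha}$ or spurious packets, forcing a fresh genuine send in the interval, hence a configuration $c_{r_\beta}$ with $LastDeliveredIndex_r = s\_index_\alpha$ after $a_{s_\alpha}$. The ``between (the first) $c_{r_\beta}$ and $a_{s_{\alpha+1}}$'' refinement in item~\ref{sMustSend} then follows by applying the same counting argument to the suffix of the interval starting at that first $c_{r_\beta}$.
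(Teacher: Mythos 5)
Your proposal is correct and follows essentially the same route as the paper's proof: item~\ref{s1AI} by code inspection of the only write to $AltIndex_s$, the reception of the full set $ACK_\alpha$ from the reset of $ACK\_set_s$ at $a_{s_\alpha}$ together with the guards of lines~\ref{ln:Sendergetsack} and~\ref{ln:Senderencodefetch}, and the pigeonhole argument that $capacity+1$ distinctly labeled acknowledgments cannot all have been in the bounded channel at the start of the interval, so at least one was genuinely sent by $p_r$ carrying $LastDeliveredIndex_r = s\_index_\alpha$, which yields $c_{r_\beta}$. The only difference is presentational (you derive item~\ref{sAI2LDI} after the counting step rather than before), and your explicit treatment of stale acknowledgments sent before $a_{s_\alpha}$ is exactly the role the capacity bound plays in the paper's Claim~\ref{thm:sAI2LDI}.
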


\begin{proof}
	We start by showing that $s\_index_\alpha$ exists (item~\ref{s1AI}) before showing that $c_{r_\beta}$ exists (item~\ref{sAI2LDI}) and that $p_s$ receives $ack_\alpha$ from $p_r$ between $a_{s_\alpha}$ and $a_{s_{\alpha+1}}$ (item~\ref{sMustSend}).
	
	\paragraph{Item~\ref{s1AI}  holds.}
	
	The value of $AltIndex_s = s\_index_\alpha$ is only changed in line~\ref{ln:Senderencodefetch} (the Sender algorithm). By the definition of $a_{s_\alpha}$, line~\ref{ln:Senderencodefetch} is not executed by any step between $a_{s_\alpha}$ and $a_{s_{\alpha + 1}}$. \ems{Thus, item~\ref{s1AI} holds.}
	
	
	\paragraph{Item~\ref{sAI2LDI}  holds.}
		
	We show that $c_{r_\beta}$ exists by showing that, between $a_{s_\alpha}$ and $a_{s_{\alpha+1}}$, there is at least one acknowledge packet, $\langle ldai, lbl \rangle$, that $p_r$ sends and $p_s$ receives, where $ldai = s\_index_\alpha$. This proves \ems{item~\ref{sAI2LDI},} because $p_r$'s acknowledgments are always sent with $ldai = LastDeliveredIndex_r$, see line~\ref{ln:receiversendACK} (the Receiver algorithm). \ems{Claim~\ref{thm:sAI2LDI} shows that item~\ref{sAI2LDI} holds.} 
	
	\begin{claim}
		\label{thm:sAI2LDI}
	\ems{Between $a_{s_\alpha}$ and $a_{s_{\alpha+1}}$, the Receiver $p_r$ sends at least one of the $ack_\alpha(\ell) \in ACK_\alpha$ packets that $p_s$ receives.}
	\end{claim}
	\begin{claimProof}
	 We show that $p_s$ receives, from the channel from $p_r$ to $p_s$, more than $capacity$ packets, i.e., the set $ACK_\alpha$. Since $capacity$ bounds the number of packets that, at any time, can be in the channel from $p_r$ to $p_s$, at least one of the $ACK_\alpha$ packets, say $ack_\alpha(\ell^\prime)$, must be sent by $p_r$ and received by $p_s$ between $a_{s_\alpha}$ and $a_{s_{\alpha+1}}$. This in fact proves that $p_r$ sends $ack_\alpha(\ell^\prime)$ after $c_{r_\beta}$. 
	\end{claimProof}
	
	\paragraph{Item~\ref{sMustSend} holds.}
	
	In order to demonstrate that $p_s$ receives the set $ACK_\alpha$, we note that $ACK\_set = \emptyset$ in configuration $c_{s_\alpha}(1)$, which immediately follows $a_{s_{\alpha}}$, see line~\ref{ln:Senderencodefetch} (the Sender algorithm). The Sender tests the arriving acknowledgment packet, $ack_\alpha$, in line~\ref{ln:Sendergetsack} (the Sender algorithm). \ems{This test asserts} $ack_\alpha$'s label to be in the range of $[1, capacity+1]$, and that they are of $ack_\alpha$'s form. Moreover, \ems{$p_s$} counts that $(capacity+1)$ different packets are added to $ACK\_set$ by adding them to $ACK\_set$, and not executing line~\ref{ln:Senderencodefetch} (the Sender algorithm) before at least $(capacity+1)$ distinct packets are in $ACK\_set$. The rest of item~\ref{sMustSend}'s proof is by Claim~\ref{thm:sAI2LDI}.
	
	This ends Lemma~\ref{th:senderconvergence}'s proof. 
\end{proof}

\begin{lemma}
	
	\label{th:receiverconvergence}
	Let $c_{r_\beta}(y)$ be the $y^{th}$ configuration between $a_{r_\beta}$ and $a_{r_{\beta+1}}$, and $PACKET_\beta(r\_index^\prime_\beta) = \{ packet_{\beta}(\ell, r\_index^\prime_\beta) \}_{\ell \in [1, n]}$ be a packet set, which could be a subset of the Receiver's $packet\_set_r$, where $packet_{\beta}(\ell, r\_index^\prime_\beta) = \langle r\_index^\prime_\beta, \ell, \ast \rangle$.

	\begin{enumerate}
		\item
		\label{r1AI}
		For any given $\beta >0$, there is a single index value, $r\_index_\beta \in [0,2]$, such that for any $y >0$, it holds that $LastDeliveredIndex_r = r\_index_\beta$ in configuration $c_{r_\beta}(y)$.
		\item
		\label{rAI2LDI}
		Between $a_{r_\beta}$ and $a_{r_{\beta+1}}$ there is at least one configuration, $c_{s_\alpha}$, such that $AltIndex_s \neq r\_index_\beta$.
		\item
		\label{rMustSend}
		There exists a single $r\_index^\prime_\beta \in [0,2] \setminus \{ r\_index_\beta \}$, such that the Receiver, $p_r$, receives all the packets in $PACKET_\beta(r\_index^\prime_\beta)$ at least once between $c_{s_\alpha}$ and $a_{r_{\beta+1}}$, where $c_{s_\alpha}$ is defined in \ems{item~\ref{rAI2LDI} of this list}, and at least ($n-capacity>0$) of them are sent by the Sender $p_s$ between $a_{r_\beta}$ and $a_{r_{\beta+1}}$.
	\end{enumerate}
\end{lemma}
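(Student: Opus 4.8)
The plan is to prove Lemma~\ref{th:receiverconvergence} as the mirror image of Lemma~\ref{th:senderconvergence}, swapping the roles of $p_s$ and $p_r$, of $AltIndex_s$ and $LastDeliveredIndex_r$, and of the data packets $\langle ai,lbl,dat\rangle$ and the acknowledgments $\langle ldai,lbl\rangle$. Item~\ref{r1AI} is the easy one: the only write to $LastDeliveredIndex_r$ in the Receiver algorithm is line~\ref{ln:receiverchengeLDI}, which runs only inside the block guarded by line~\ref{ln:receivergetsexactone}, one statement before line~\ref{ln:receiverdecodedeliver}; hence every execution of line~\ref{ln:receiverdecodedeliver} --- in particular $a_{r_\beta}$ --- is accompanied by exactly one such write, no such write occurs strictly between $a_{r_\beta}$ and $a_{r_{\beta+1}}$, and so $LastDeliveredIndex_r$ is invariant there. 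Name this value $r\_index_\beta$, namely the $ind$ that $a_{r_\beta}$ installed.

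For items~\ref{rAI2LDI} and~\ref{rMustSend} I would argue from the guard that holds at $a_{r_{\beta+1}}$ (line~\ref{ln:receivergetsexactone}): there is a unique $ind$ with $ind\neq LastDeliveredIndex_r=r\_index_\beta$ and $n\le|index(ind)|$; with the receive guard of line~\ref{ln:receivercheckpacketset} (no repeated $(ai,lbl)$, and $lbl\in[1,n]$) this means $packet\_set_r$ holds at $a_{r_{\beta+1}}$ exactly the $n$ distinctly labeled packets $\langle ind,\ell,\ast\rangle$, $\ell\in[1,n]$. Put $r\_index^\prime_\beta:=ind$; it lies in $[0,2]\setminus\{r\_index_\beta\}$ and is single because of the $\exists!$ quantifier in line~\ref{ln:receivergetsexactone} and because line~\ref{ln:receiverdoforever} empties $packet\_set_r$ whenever two distinct indices each accumulate $n$ packets. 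Each of these $n$ packets was inserted by line~\ref{ln:recieverreceivingpacketset} after the last reset of $packet\_set_r$ preceding $a_{r_{\beta+1}}$; resets happen only at line~\ref{ln:receiverchengeLDI} (that is, at $a_{r_\beta}$) or line~\ref{ln:receiverdoforever}, so that reset $\tau$ lies in $[a_{r_\beta},a_{r_{\beta+1}})$ and all $n$ packets are received within $(a_{r_\beta},a_{r_{\beta+1}})$. At time $\tau$ the $p_s\to p_r$ channel carries at most $capacity$ packets, duplication only replays copies of those, and the Receiver keeps one packet per $(ai,lbl)$; hence at most $capacity$ of the $n$ packets are copies of packets in transit at $\tau$, and the remaining at least $n-capacity$ ($>0$, since the error-correction construction needs $n>capacity$) are placed into the channel by $p_s$'s line~\ref{ln:Sendersend} strictly after $\tau$ and before being received, i.e.\ within $(a_{r_\beta},a_{r_{\beta+1}})$. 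Since line~\ref{ln:Sendersend} transmits $packet\_set()=\{\langle AltIndex,i,data[i]\rangle\}$, at each such send $AltIndex_s=r\_index^\prime_\beta\neq r\_index_\beta$, which yields the configuration $c_{s_\alpha}$ required by item~\ref{rAI2LDI}; I take $c_{s_\alpha}$ to be the \emph{first} such configuration in $(a_{r_\beta},a_{r_{\beta+1}})$, mirroring the ``(the first) $c_{r_\beta}$'' device of Lemma~\ref{th:senderconvergence}, and these $\ge n-capacity$ packets are exactly the Sender-sent ones claimed in item~\ref{rMustSend}. As every Sender-originated $PACKET_\beta(r\_index^\prime_\beta)$ packet is transmitted at a configuration with $AltIndex_s=r\_index^\prime_\beta$, hence no earlier than $c_{s_\alpha}$, it is received after $c_{s_\alpha}$.

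I expect the delicate point to be the part of item~\ref{rMustSend} that asks the Receiver to receive \emph{all} $n$ packets of $PACKET_\beta(r\_index^\prime_\beta)$, not just the $\ge n-capacity$ Sender-originated ones, between $c_{s_\alpha}$ and $a_{r_{\beta+1}}$: a packet with index $r\_index^\prime_\beta$ that was already in transit at $\tau$ (planted by a transient fault, or sent by $p_s$ in an earlier epoch when $AltIndex_s=r\_index^\prime_\beta$) may be received during $[\tau,c_{s_\alpha})$, before the Sender's current $r\_index^\prime_\beta$-epoch starts. I would close this gap either by sharpening $c_{s_\alpha}$ to the configuration immediately before the first post-$\tau$ receipt of an $r\_index^\prime_\beta$-indexed packet and then showing $AltIndex_s=r\_index^\prime_\beta$ there as well (immediate when $\tau=a_{r_\beta}$ and $AltIndex_s$ already equals $r\_index^\prime_\beta$, and otherwise reducible to the auxiliary claim that the $p_s\to p_r$ channel contains no $r\_index^\prime_\beta$-indexed packet throughout $[\tau,c_{s_\alpha})$), or by noting that every subsequent use of this lemma in Theorems~\ref{th:closure} and~\ref{th:selfstab} invokes only the $\ge n-capacity$ Sender-originated packets, so the clause can safely read ``at least $n-capacity$ of the packets in $PACKET_\beta(r\_index^\prime_\beta)$'' without weakening anything downstream. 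The remaining bookkeeping is routine; in particular, the Receiver's \emph{Do forever} body occupies several atomic steps --- one per acknowledgment sent at line~\ref{ln:receiversendACK} --- but the only writes anywhere in the algorithm are to $LastDeliveredIndex_r$ at line~\ref{ln:receiverchengeLDI} and to $packet\_set_r$ at lines~\ref{ln:receiverdoforever}, \ref{ln:receiverchengeLDI} and~\ref{ln:recieverreceivingpacketset}, all accounted for above, so none of the invariants is affected.
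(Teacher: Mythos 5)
Your proposal follows essentially the same route as the paper's proof: item~\ref{r1AI} via the observation that line~\ref{ln:receiverchengeLDI} is the only write to $LastDeliveredIndex_r$, and items~\ref{rAI2LDI} and~\ref{rMustSend} via the pigeonhole argument that of the $n$ distinctly labeled packets accumulated in $packet\_set_r$ at most $capacity$ can predate the relevant reset, so at least $n-capacity$ were stamped and sent by $p_s$ with $AltIndex_s=r\_index^\prime_\beta\neq r\_index_\beta$, which supplies $c_{s_\alpha}$. The ``delicate point'' you flag is real and is in fact glossed over by the paper as well --- its proof of item~\ref{rMustSend} only establishes reception of the $n$ packets between $a_{r_\beta}$ and $a_{r_{\beta+1}}$, not between $c_{s_\alpha}$ and $a_{r_{\beta+1}}$, and it also implicitly assumes the last reset of $packet\_set_r$ is the one at $a_{r_\beta}$ rather than a later one at line~\ref{ln:receiverdoforever} --- so your more careful handling of the reset point $\tau$ and your proposed weakening to the $\geq n-capacity$ Sender-originated packets (which is all that theorems~\ref{th:closure} and~\ref{th:selfstab} use) are improvements rather than deviations.
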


\begin{proof}
	We begin the proof by showing that $r\_index_\beta$ exists (item~\ref{r1AI}) before showing that $c_{s_\alpha}$ exists (item~\ref{rAI2LDI}) and that $p_r$ receives the packets $packet_{\beta,r\_index^\prime_\beta}(\ell)$ from $p_s$ (item~\ref{rMustSend}).
	
	\paragraph{Item~\ref{r1AI} holds.}
	
	The value of $LastDeliveredIndex_r = r\_index_\beta$ is only changed in line~\ref{ln:receiverchengeLDI} (the Receiver algorithm). By the definition of $a_{r_\beta}$, line~\ref{ln:receiverchengeLDI} is not executed by any step between $a_{r_\beta}$ and $a_{r_{\beta + 1}}$. Therefore, for any given $\beta$, there is a single index value, $r\_index_\beta \in [0,2]$, such that for any $y >0$, it holds that $LastDeliveredIndex_r = r\_index_\beta$ in $c_{s_\beta}(y)$. \ems{Thus, item~\ref{r1AI} holds.}
	
	\paragraph{Item~\ref{rAI2LDI} holds.}
	
	We show that $c_{s_\alpha}$ exists by \ems{observing} that the Receiver, $p_r$, receives all the packets in \ems{$PACKET_\beta(r\_index^\prime_{\beta+1}):r\_index^\prime_{\beta+1}\neq r\_index_\beta$ via} the channel from $p_s$ to $p_r$, (each at least once) between $a_{r_{\beta}}$ and $a_{r_{\beta+1}}$. \ems{Recall that} $capacity$ bounds the number of packets that can be in the channel from $p_s$ to $p_r$, at any time. Hence, a subset, $S_\beta(r\_index^\prime_{\beta+1}) \subseteq PACKET_\beta(r\_index^\prime_{\beta+1})$, of at least ($(n-capacity)>0$) packets must be sent by $p_s$ between $a_{r_\beta}$ and $a_{r_{\beta+1}}$. This in fact proves that $p_s$ sends $S_\beta(r\_index^\prime_{\beta+1})$ after (the first) $c_{s_\alpha}$, because $p_s$ uses \ems{some $r\_index^{\prime\prime}_{\beta+1} \in  [0,2]$} as the alternating index for all the packets in $S_\beta(r\_index^\prime_{\beta+1})$, see \ems{lines~\ref{ln:Senderencodefetch} to~\ref{ln:Sendersend} (the Sender algorithm) as well as line~\ref{ln:receivercheckpacketset} (the Receiver algorithm). Moreover, by the definition of} function $packet\_set()$, as well as by the \ems{item~\ref{r1AI},} we have \ems{$r\_index^{\prime\prime}_{\beta+1} = r\_index^{\prime}_\beta\neq r\_index_\beta$. Thus, item~\ref{rAI2LDI} holds.}

	\paragraph{Item~\ref{sMustSend} holds.}
		
	~~~~\ems{We} show that, between $a_{r_\beta}$ and $a_{r_{\beta+1}}$, the Receiver $p_r$ receives packets, $packet_{\beta,r\_index^\prime_\beta}(\ell) \in PACKET_\beta(r\_index^\prime_\beta)$, with $n$ distinct labels from the channel from $p_s$ to $p_r$ . We note that $packet\_set_r = \emptyset$ in the configuration $c_{r_\beta}(1)$, which immediately follows $a_{r_{\beta}}$, see line~\ref{ln:receiverchengeLDI} (the Receiver algorithm). The Receiver tests the arriving packets, $packet_{\beta,r\_index^\prime_\beta}(\ell)$, in line~\ref{ln:receivercheckpacketset} (the Receiver algorithm). \ems{This test asserts} $packet_{\beta,r\_index^\prime_\beta}(\ell)$'s label to be in the range of $[1, n]$, $ packet_{\beta,r\_index^\prime_\beta}(\ell)$'s index to be different from $LastDeliveredIndex_r$ and that they are of $packet_{\beta,r\_index^\prime_\beta}(\ell)$'s form. Moreover, \ems{$p_r$} counts that $n$ packets with alternating index different from $LastDeliveredIndex_r$ and $n$ distinct labels are added to $packet\_set_r$ by not executing lines~\ref{ln:RmessagegenerationForEach} to~\ref{ln:receiverdecodedeliver} (the Receiver algorithm) before at least $n$ distinct labels are in $packet\_set_r$. \ems{Thus, item~\ref{sMustSend} holds.}
	
	This ends Lemma~\ref{th:receiverconvergence}'s proof. 
\end{proof}

Lemma~\ref{th:senderinter} borrows notation from lemmas~\ref{th:senderconvergence} and~\ref{th:receiverconvergence}.

\begin{lemma}
	\label{th:senderinter}
	\label{th:receiverinter}
	Suppose that $\alpha, \beta > 2$. Between $a_{s_\alpha}$ and $a_{s_{\alpha+1}}$, the Receiver takes at least one $a_{r_\beta}$ step, and that between $a_{r_\beta}$, and $a_{r_{\beta+1}}$, the Sender takes at least one $a_{s_{\alpha}}$ step. Moreover, equations~$\ref{eq:sindalpha}$ to $\ref{eq:sindbetaalpha}$ hold.
	%
	%
	\begin{eqnarray}
	\label{eq:sindalpha}
	s\_index_{\alpha+1} &=& s\_index_{\alpha} + 1 \bmod 3\\
	\label{eq:sindbeta}
	r\_index_{\beta+1} &=& r\_index_{\beta} + 1 \bmod 3\\
	\label{eq:sindalphabeta}
	r\_index_{\beta} &=& s\_index_{\alpha} \\
	\label{eq:sindbetaalpha}
	s\_index_{\alpha+1} &=& r\_index_{\beta} + 1 \bmod 3
	\end{eqnarray}
	
\end{lemma}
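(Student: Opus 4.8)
The plan is to prove Lemma~\ref{th:senderinter} by stitching together the three facts already established in Lemmas~\ref{th:senderconvergence} and~\ref{th:receiverconvergence}, using them to pin down how the two alternating indices must advance relative to one another. First I would establish the two ``interleaving'' claims, namely that between $a_{s_\alpha}$ and $a_{s_{\alpha+1}}$ the Receiver performs at least one delivery step $a_{r_\beta}$, and symmetrically that between $a_{r_\beta}$ and $a_{r_{\beta+1}}$ the Sender performs at least one fetch step $a_{s_\alpha}$. The first follows from Lemma~\ref{th:senderconvergence}(\ref{sAI2LDI})--(\ref{sMustSend}): between $a_{s_\alpha}$ and $a_{s_{\alpha+1}}$ the Receiver must send an acknowledgment carrying $ldai = s\_index_\alpha$, but by Lemma~\ref{th:receiverconvergence}(\ref{r1AI}) the value $LastDeliveredIndex_r$ is constant on each inter-delivery interval, so to have sent that acknowledgment the Receiver must have executed line~\ref{ln:receiverchengeLDI} (i.e. an $a_{r_\beta}$ step) within the interval unless $LastDeliveredIndex_r$ already equalled $s\_index_\alpha$ at $a_{s_\alpha}$. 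Handling that ``already equal'' boundary case is where the hypothesis $\alpha,\beta>2$ does its work: with at least two full cycles behind us, the values on the relevant adjacent intervals are forced by items (\ref{sindalpha})--(\ref{sindbeta}) which I prove next, so the acknowledgment with $ldai = s\_index_\alpha$ cannot have been a stale leftover. The symmetric statement follows the same way from Lemma~\ref{th:receiverconvergence}(\ref{rAI2LDI})--(\ref{rMustSend}) together with Lemma~\ref{th:senderconvergence}(\ref{s1AI}).

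Next I would prove the arithmetic relations. Equation~(\ref{eq:sindalpha}), $s\_index_{\alpha+1} = s\_index_\alpha + 1 \bmod 3$, is immediate from line~\ref{ln:Senderencodefetch} of the Sender algorithm, which sets $AltIndex \gets (AltIndex+1)\bmod 3$ each time an $a_{s_\alpha}$ step fires; combined with Lemma~\ref{th:senderconvergence}(\ref{s1AI}) (constancy between fetches) this gives the relation for all $\alpha > 0$, in particular $\alpha > 2$. Equation~(\ref{eq:sindbeta}) is the mirror image, using line~\ref{ln:receiverchengeLDI} of the Receiver algorithm where $LastDeliveredIndex_r \gets ind$, combined with the fact (to be argued) that the $ind$ adopted at the $(\beta+1)$-th delivery is exactly $s\_index_\alpha + 1$ of whichever Sender index is current — here I would lean on Lemma~\ref{th:receiverconvergence}(\ref{rMustSend}), which says the decoded packet set $PACKET_{\beta}(r\_index'_{\beta})$ carries a single alternating index $r\_index'_{\beta}$ distinct from $r\_index_\beta$ and that most of its packets were genuinely sent by $p_s$ in the interval, hence $r\_index'_{\beta}$ equals the Sender's current $s\_index$. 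For equation~(\ref{eq:sindalphabeta}), $r\_index_\beta = s\_index_\alpha$: choose the particular $a_{r_\beta}$ guaranteed to lie inside $(a_{s_\alpha}, a_{s_{\alpha+1}})$; its delivered index was adopted from the Sender's packets, which by Lemma~\ref{th:senderconvergence}(\ref{s1AI}) all carried $s\_index_\alpha$ throughout that interval, so the value recorded, which is $r\_index_\beta$ on the subsequent interval, equals $s\_index_\alpha$. Finally equation~(\ref{eq:sindbetaalpha}), $s\_index_{\alpha+1} = r\_index_\beta + 1 \bmod 3$, is just (\ref{eq:sindalpha}) substituted into (\ref{eq:sindalphabeta}).

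The main obstacle I anticipate is the careful bookkeeping of ``which $\alpha$ goes with which $\beta$'' at the boundaries — the statement quantifies over an $a_{s_\alpha}$ and an $a_{r_\beta}$ that need not be the matched pair, so I must be precise about showing the interval $(a_{s_\alpha}, a_{s_{\alpha+1}})$ contains a delivery whose index is $s\_index_\alpha$ and not, say, $s\_index_{\alpha-1}$ lingering from before, or $s\_index_{\alpha+1}$ if the Receiver somehow races ahead. This is exactly why the hypothesis $\alpha,\beta > 2$ appears: before two complete alternation cycles there could be transient garbage in $LastDeliveredIndex_r$, in $ACK\_set_s$, or in the channels (packets placed there by the pre-stabilization transient fault), and only after the indices have cycled enough times does the mutual exclusion between the three index values force the unique pairing. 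Concretely, I would argue that once $a_{s_2}$ and $a_{r_2}$ have occurred, every packet in either channel that can still pass the acceptance tests (lines~\ref{ln:Sendergetsack} and~\ref{ln:receivercheckpacketset}) must have been generated after the first post-fault fetch/delivery, so the counting arguments in Lemmas~\ref{th:senderconvergence} and~\ref{th:receiverconvergence} apply cleanly and the indices lock into the $+1 \bmod 3$ lockstep described by the four equations. The remaining steps — plugging the equations together and checking the $\bmod 3$ arithmetic — are routine.
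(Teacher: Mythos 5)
Your proposal follows essentially the same route as the paper's proof: both obtain the interleaving of $a_{s_\alpha}$ and $a_{r_\beta}$ steps from items~\ref{sAI2LDI}--\ref{sMustSend} of Lemma~\ref{th:senderconvergence} and items~\ref{rAI2LDI}--\ref{rMustSend} of Lemma~\ref{th:receiverconvergence} (the Receiver must come to hold $LastDeliveredIndex_r = s\_index_\alpha$ to produce the acknowledgments the Sender needs, and symmetrically), and both derive the four $\bmod~3$ relations from the increment in line~\ref{ln:Senderencodefetch}, the assignment in line~\ref{ln:receiverchengeLDI}, and the constancy/matching items of those lemmas. The only noticeable difference is that you flag the ``$LastDeliveredIndex_r$ already equals $s\_index_\alpha$'' boundary case explicitly and defer it to the $\alpha,\beta>2$ hypothesis, whereas the paper disposes of it (equally tersely) by comparing with the adjacent interval in which $LastDeliveredIndex_r=s\_index_{\alpha-1}$.
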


\begin{proof}
~	
\paragraph{Between $\mathbf{a_{s_\alpha}}$ and $\mathbf{a_{s_{\alpha + 1}}}$, there is at least one $\mathbf{a_{r_\beta}}$ step.}
	By Lemma~\ref{th:senderconvergence} and line~\ref{ln:Senderencodefetch} (the Sender algorithm), in any configuration, $c_{s_1}(x)$, that is between $a_{s_1}$ and $a_{s_{2}}$, the Sender is using a single alternating index, $s\_index_1$, and in any configuration, $c_{s_2}(x)$, that is between $a_{s_2}$ and $a_{s_{3}}$, the Sender is using a single alternating index, $s\_index_2$, such that $s\_index_2 = s\_index_1 + 1 \bmod 3$. Similarly, consider configuration, $c_{s_\alpha}(x)$, that is between $a_{s_\alpha}$ and $a_{s_{\alpha+1}}$ and conclude equations~$\ref{eq:sindalpha}$ and~$\ref{eq:sindalphabeta}$, cf. \ems{line~\ref{ln:Senderencodefetch} as well as} item $\ref{s1AI}$, and respectively, $\ref{sAI2LDI}$ of Lemma~\ref{th:senderconvergence}.
	
	Lemma~\ref{th:senderconvergence} also shows that for $\alpha \in \{1,2, \ldots \}$, there are configurations, $c_{r_{\beta}}$, in which $s\_index_\alpha=LastDeliveredIndex_r$. This implies that between $a_{s_\alpha}$ and $a_{s_{\alpha + 1}}$, the Receiver changes the value of $LastDeliveredIndex_r$ at least once, where \ems{$\alpha \in \{1,2, \ldots\}$.} Thus, by $a_{r_\beta}$'s definition and line~\ref{ln:receiverchengeLDI} (the Receiver algorithm), there is at least one $a_{r_\beta}$ step between $a_{s_\alpha}$ and $a_{s_{\alpha + 1}}$.
	
\paragraph{Between $\mathbf{a_{r_\beta}}$ and $\mathbf{a_{r_{\beta + 1}}}$, there is at least one $\mathbf{a_{s_\alpha}}$ step.}
	By \ems{the proof of} Lemma~\ref{th:receiverconvergence} and line~\ref{ln:receiverchengeLDI} (the Receiver algorithm), in any configuration, $c_{r_1}(y)$, that is between $a_{r_1}$ and $a_{r_{2}}$, the Receiver is using a single $LastDeliveredIndex_r$ \ems{value, denoted by} $r\_index_1$, and in any configuration, $c_{r_2}(y)$, that is between $a_{r_2}$ and $a_{r_{3}}$, the Receiver is using a single $LastDeliveredIndex_r$  \ems{value, denoted by} $r\_index_2$, such that $r\_index_2 = r\_index_1 + 1 \bmod 3$. In a similar manner, we consider configuration, $c_{r_\beta}(y)$, that is between $a_{r_\beta}$ and $a_{r_{\beta + 1}}$ and conclude equations~$\ref{eq:sindbeta}$ and~$\ref{eq:sindbetaalpha}$, cf. \ems{line~\ref{ln:receiverchengeLDI} (the Receiver algorithm) as well as} item $\ref{r1AI}$, and respectively, $\ref{rAI2LDI}$ of Lemma~\ref{th:receiverconvergence}.
	
	Lemma~\ref{th:receiverconvergence} also shows that for $\beta \in \{1,2, \ldots\}$, there are configurations, $c_{s_{\alpha}}$, in which $AltIndex_s \neq r\_index_\beta$. This implies that between $a_{r_\beta}$ and $a_{r_{\beta + 1}}$, the Sender changes the value of $AltIndex_s$ at least once. Thus, by $a_{s_\alpha}$'s definition, there is at least one $a_{s_\alpha}$ step between $a_{r_\beta}$ and $a_{r_{\beta + 1}}$.	
\end{proof}

\Subsection{Closure}
\label{sec:closure}
\ems{The closure property proof uses the following definition of a safe configuration, which considers all the alternating indices that are in a given configuration, $c$, such as the packet set indices, $\langle ind, lbl \rangle \in \{ \langle AltIndex, lbl \rangle:\langle AltIndex, lbl, dat \rangle \in packet\_set \}$, and the indices of the acknowledgment packet set, $\langle ind, lbl \rangle \in ACK\_set$ $ =$ $\{ \langle AltIndex, lbl \rangle \}$. } 

\ems{Given $X \in AI$, we define $index(ind, X)=\{ \langle ind, lbl \rangle : \langle ind, lbl,\bullet \rangle \in X  \}$, where $AI=\{ packet\_set, ACK\_set, channel_{s,r}, channel_{r,s}\}$ as well as $channel_{s,r}$ and $channel_{r,s}$ are the communication channel sets from the Sender to the Receiver, and respectively, from the Receiver to the Sender. We denote by $\{ 0^{\kappa_0}, 1^{\kappa_1}, 2^{\kappa_2} \}_X$ the fact that in configuration $c$, it holds $\forall i \in [0,2] : \kappa_i=|index(i, X)|$, where $X \in \{ packet\_set, ACK\_set \}$ refers to the values of $packet\_set$ and $ACK\_set$ in $c$.} 

\ems{ We consider the alternating index sequence, $ais$, stored in $AltIndex_s, \{ 0^{\kappa_0}, 1^{\kappa_1}, 2^{\kappa_2} \}_{channel_{s,r}}$, $\{ 0^{\kappa_0}, 1^{\kappa_1}, 2^{\kappa_2} \}_{packet\_set_r}$, $LDI_r$, $\{ 0^{\kappa_0},$ $1^{\kappa_1},$ $2^{\kappa_2} \}_{channel_{r,s}}$, and $\{ 0^{\kappa_0},$ $1^{\kappa_1},$ $2^{\kappa_2} \}_{ACK\_set_s}$ in this order, where $LDR_r$'s value is $LastDeliveredIndex_r$ as well as $channel_{s,r}$ and $channel_{r,s}$ are the communication channel sets from the Sender to the Receiver, and respectively, from the Receiver to the Sender. We show that a configuration, $c$, in which $ais = y$, $\{ \bullet^\ast \}$, $\{ z^{\kappa_z} \}_{z \in [0,2] \setminus \{ y \}}$, $y$, $\{ \bullet^\ast \}$, $\{ y^{capacity+1} \}$ is a safe configuration (Theorem~\ref{th:closure}), where $\ast$ denotes any finite value, $y \in [0,2]$ and $capacity \geq (\sum_{z \in [0,2] \setminus \{ y \}}  \kappa_z)$. Namely, $c$ is a safe configuration, which starts an execution that is in $LE_{S^2ARQ}$.}

\begin{theorem}[$S^2ARQ$ closure]
	\label{th:closure}
	%
	%
	Suppose that in $R$'s first configuration, $c$, it holds that $ais = y$, $\{ \bullet^\ast \}$, $\{ z^{\kappa_z} \}_{z \in [0,2] \setminus \{ y \}}$, $y$, $\{ \bullet^\ast \}$, $\{ y^{capacity+1} \}$ is a safe configuration, where $y \in [0,2]$ and $capacity \geq (\sum_{z \in [0,2] \setminus \{ y \}}  \kappa_z)$. Then, $c$ is safe.
	%
	%
\end{theorem}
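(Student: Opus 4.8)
The plan is to follow, starting from $c$, one complete round of the protocol, showing that the round returns the system to a configuration of exactly the same form with the parameter $y$ replaced by $(y+1)\bmod 3$, and that from the Receiver's first delivery onward every fetched message batch is delivered exactly once and in order, so that $R$ has a legal suffix and $c$ is safe. The starting point is the Sender: from $c$ the first iteration of its do-forever loop finds $ACK\_set_s=\{y^{capacity+1}\}\supseteq\{AltIndex_s\}\times[1,capacity+1]$ (since $AltIndex_s=y$), so it performs the first fresh fetch $a_{s_1}$, setting $AltIndex_s\gets(y+1)\bmod 3$ and $ACK\_set_s\gets\emptyset$ and fetching and encoding a batch $im_1$. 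Before $a_{s_1}$ no received acknowledgement alters $ACK\_set_s$ or $AltIndex_s$ (an ack with $ldai=y$ is already present, one with $ldai\neq y$ is discarded by the guard of line~\ref{ln:Sendergetsack} of the Sender algorithm), and $a_{s_1}$ does occur, by the periodic timer together with the fair-communication assumption. I would then check that the Receiver performs no delivery before $a_{s_1}$: its cleanup step (line~\ref{ln:receiverdoforever} of the Receiver algorithm) keeps $packet\_set_r$ free of index-$y$ packets, and for each index $z\in\{(y+1)\bmod3,(y+2)\bmod3\}$ the number of distinctly labelled index-$z$ packets the Receiver can ever obtain from the stale sources present in $c$ — at most $\sum_{z'\neq y}\kappa_{z'}\le capacity$ from $packet\_set_r$ (the hypothesis) and at most $|channel_{s,r}|\le capacity$ from the channel — is at most $2\cdot capacity<n$, so the guard of line~\ref{ln:receivergetsexactone} is never met.

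Next I would locate the Receiver's first delivery $a_{r_1}$. Since $LastDeliveredIndex_r$ is still $y$ at $a_{s_1}$ (no delivery has occurred yet), Lemma~\ref{th:senderconvergence} (item~\ref{sAI2LDI}) forces a delivery that sets $LastDeliveredIndex_r=(y+1)\bmod 3$ to take place strictly between $a_{s_1}$ and the next fetch $a_{s_2}$; the first delivery after $c$, call it $a_{r_1}$, is thus in that interval, and by Lemma~\ref{th:receiverconvergence} (item~\ref{rMustSend}) it is produced from $n$ distinctly labelled packets carrying a single index, which must be $(y+1)\bmod 3$ — the index the Sender has just switched to — because before $a_{s_2}$ the Sender has emitted no index-$((y+2)\bmod 3)$ packet at all, and the stale index-$((y+2)\bmod 3)$ packets alone number fewer than $n$. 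After $a_{r_1}$ the Receiver has $packet\_set_r=\emptyset$ and $LastDeliveredIndex_r=(y+1)\bmod 3$; its decoded content may fail to equal $im_1$ if the stale contamination exceeded $capacity$, but the task asks only for a legal suffix. Once the Sender has, by Lemma~\ref{th:senderconvergence} (item~\ref{sMustSend}), collected the $capacity+1$ acknowledgements it needs before $a_{s_2}$, one checks that the configuration just before $a_{s_2}$ again matches the stated form, now with $y$ replaced by $(y+1)\bmod 3$: $AltIndex_s$ has advanced, $packet\_set_r$ was emptied and thereafter holds only capacity-bounded packets none of whose indices is $(y+1)\bmod 3$, the channels hold arbitrary capacity-bounded content, and $ACK\_set_s$ has been refilled to $\{((y+1)\bmod 3)^{capacity+1}\}$.

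From $a_{r_1}$ on I would run the clean induction. For every $k\ge 2$, the fetch $a_{s_k}$ follows the delivery $a_{r_{k-1}}$ (Lemma~\ref{th:senderconvergence}, item~\ref{sMustSend}: the Sender needs $capacity+1$ acknowledgements with $ldai=(y+k-1)\bmod 3$, of which at most $capacity$ can be the stale ones in $channel_{r,s}$, so at least one is sent by the Receiver after $a_{r_{k-1}}$), and $a_{r_k}$ decodes batch $k$ from $n$ distinctly labelled index-$((y+k)\bmod 3)$ packets at most $capacity$ of which are not from batch $k$: $packet\_set_r$ was emptied at $a_{r_{k-1}}$; between $a_{r_{k-1}}$ and $a_{r_k}$ the Sender is on batch $k-1$ or $k$, and batch $k-1$ carries the distinct index $(y+k-1)\bmod 3$ (Lemma~\ref{th:senderinter}), so the only index-$((y+k)\bmod 3)$ packets it injects are batch $k$'s; $|channel_{s,r}|\le capacity$ at $a_{r_{k-1}}$; and after $c$ the environment only duplicates, reorders, and omits packets, never creating new ones, and duplication never increases the count of distinct labels. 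Hence $Decode$ recovers $im_k$ exactly, so $a_{r_k}$ delivers precisely $im_k$. Letting $R'$ be the suffix of $R$ beginning after $a_{r_1}$ but before $a_{s_2}$, we obtain $s_{R'}=r_{R'}=(im_2,im_3,\ldots)$, and since the two do-forever loops run infinitely often and communication is fair, each successive batch's packets and acknowledgements eventually cross the channels, yielding infinitely many deliveries in $R'$; thus $R\in LE_{S^2ARQ}$, so $c$ is safe.

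The main obstacle is the stale-packet accounting inside the induction step: one has to argue that a batch's index value recurs only every three batches, so that whenever it comes around again the surviving old packets bearing that value are already capped by the channel capacity, and that neither the Sender (which under any fixed index only ever injects the current batch's packets) nor the environment (which after $c$ can only copy packets already present, not fabricate new ones) can lift this count above $capacity$. The secondary technical point is pinning down the first post-$c$ delivery — showing it must draw from the index the Sender has just switched to, rather than from an index reachable only through stale packets — which is exactly where the hypothesis $capacity\ge\sum_{z\neq y}\kappa_z$, hence $2\cdot capacity<n$, is used.
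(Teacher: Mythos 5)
Your proposal is correct and follows essentially the same route as the paper's own proof: trace one round trip from $c$ (Sender fetches and advances $AltIndex$, Receiver accumulates $n$ distinctly labelled packets of the new index and delivers, Sender collects $capacity+1$ matching acknowledgments), use the capacity bound to show the exchanged packets and acknowledgments cannot all be stale, and observe that the resulting configuration again matches the theorem's hypothesis with $y$ advanced modulo $3$. You are in fact somewhat more explicit than the paper where it matters — you unroll the induction over subsequent batches, note that the very first post-$c$ delivery may still be corrupted by up to $2\cdot capacity$ stale packets (which the legal-execution definition tolerates as part of the discarded prefix), and pin down exactly where the hypothesis $capacity \geq \sum_{z\neq y}\kappa_z$ is used — but these are refinements of, not departures from, the paper's argument.
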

\begin{proof}
	The correctness proof shows \ems{that} after configuration $c$, the system reaches configurations in which: (1) the Sender, $p_s$, increments its alternating index and starts transmitting a new message batch, $m$, (2) $p_r$, receives between $(n-capacity)$ and $n$ of $m$'s packets (with its alternating index from the previous step), and (3) $p_s$ receives at least one acknowledgment (with the alternating index from the first step) in which the $p_r$ acknowledges $m$'s packets. The proof shows that this is how $p_s$ and $p_r$ exchange messages and alternative indices. Therefore, $c$ starts a legal execution. For the sake of a simple presentation, we assume that $y=0$. \ems{Generality is not lost since the same arguments hold for any $y \in [0,2]$.}
	
	\medskip
	
	\ems{\textbf{(i)} The proof's focus now move to the action taken by the Sender.}
	In $c$, $p_s$'s state satisfies the condition $(ACK\_set = \{ AltIndex \} \times [1, capacity$ $+ 1 ])$ of line~\ref{ln:Senderencodefetch} (the Sender algorithm). Therefore, $p_s$ increments $AltIndex$ ($\bmod~3$), empties $ACK\_set_s$ and fetches a new batch of $pl$ messages, $m$, that it needs to sent to $p_r$. Thus, the system reaches configuration $c^{\prime}$ in which $ais$ $=$ $1$, $\{ \bullet^\ast \},$ $\{ \bullet^\ast \}$, $0$, $\{ \bullet^\ast \}$, $\{ \}$.
	
	\medskip
	
	\ems{(ii) The proof continues by reviewing the assertions made on the sender-side.}
	Note that by lines~\ref{ln:Senderencodefetch} to~\ref{ln:Sendersend} (the Sender algorithm), $p_s$ does not stop sending $m$'s packets with alternating indices $1$ until $ACK\_set_s$ has $(capacity+1)$ packets with the alternating index that is equal to $AltIndex_s=1$. Until that happens,
	lines~\ref{ln:Sendergetsack} and~\ref{ln:Senderaddack} (the Sender algorithm) implies that $p_s$ accepts acknowledgments that their alternating index is $AltIndex_s=1$, i.e., $ais$ $=$ $1$, $\{ 1^\ast, \bullet^\ast \},$ $\{ \bullet^\ast \}$, $0$, $\{ \bullet^\ast \}$, $\{ 1^\ast \}$.
	
	\medskip
	
	\ems{(iii) The proof's focus now moves to the action taken by the Receiver.}
	By line~\ref{ln:receivergetsexactone}, as well as lines~\ref{ln:receivercheckpacketsetevent} and~\ref{ln:recieverreceivingpacketset} (the Receiver algorithm), $p_r$ does not stop accepting $m$'s packets, which have alternating indices $1$, until $packet\_set_r$ has $n$ packets with \ems{the alternating index $1$, which} is different from the value of $LastDeliveredIndex_r=0$. Recall that the communication channel set, $channel_{s,r}$, from the Sender to the Receiver contains at most $capacity$ packets. Therefore, once $p_r$ has $n$ packets in $packet\_set$ (with alternating index $ai \neq 0$), $p_r$ must have received at least $(n-capacity)$ of these packets from $p_s$. Thus, the system reaches configuration $c^{\prime\prime}$ in which $ais$ $=$ $1$, $\{ 1^\ast, \bullet^\ast \},$ $\{ 1^n, 2^{\kappa_2} \}$, $0$, $\{ \bullet^\ast \}$, $\{  1^\ast \}$, where check this @@ $capacity \geq \kappa_2$.
	
	By lines~\ref{ln:receivergetsexactone} to~\ref{ln:receiverdecodedeliver} (the Receiver algorithm), $p_r$ empties $packet\_set_r$, updates $LastDeliveredIndex_r$ with the alternating index, $1$, of $m$'s packets, before decoding and delivering the messages encoded by $packet\_set_r$, as well as starting to send acknowledgements with the alternating index $LastDeliveredIndex_r = 1$, see line~\ref{ln:receiversendACK} (the Receiver algorithm). Thus, the system reaches configuration $c^{\prime\prime\prime}$ in which $ais$ $=$ $1$, $\{ 1^\ast, \bullet^\ast \},$ $\{  \}$, $1$, $\{ 1^\ast, \bullet^\ast \}$, $\{  1^\ast \}$.
	
	\medskip
	
	\ems{(iv) The proof's focus now shifts from the Receiver back to the Sender.}
	By line~\ref{ln:receiversendACK} (the Receiver algorithm) and lines~\ref{ln:Sendergetsack} and~\ref{ln:Senderaddack} (the Sender algorithm), $p_r$ keeps on acknowledging $m$'s packets until $p_s$ receives $(capacity+1)$ packets of acknowledgment from $p_r$. Thus, the system reaches configuration $c^{\prime\prime\prime\prime}$ in which $ais$ $=$ $1$, $\{ 1^\ast, \bullet^\ast \},$ $\{ 0^\ast, 2^\ast \}$, $1$, $\{ 1^\ast, \bullet^\ast \}$, $\{  1^{(capacity+1)} \}$.
	
	Note that, for $y=1$, this lemma claims that $c^{\prime\prime\prime\prime}$ is safe. Moreover, since we started in a configuration in which the communication channel sets from the Sender to the Receiver, and the Receiver to the Sender had no $n>capacity$, and respectively, $(capacity+1)$ packets with the alternating index $1$ exist, the Sender must have received at least one acknowledgment for $m$'s packet with the alternating index $1$ only after the Receiver receives at least one of for $m$'s packet with alternating index $1$, which happened after $p_s$ had fetched the batch messages of $m$ and incremented its alternating index to $1$. Therefore, $c$ starts a legal execution.
\end{proof}

\Subsection{Convergence}

Theorem~\ref{th:selfstab} uses the term safe configuration, which is a configuration that can only be followed by a legal execution (Section~\ref{sec:closure}). The proof of this theorem borrows notation from lemmas~\ref{th:senderconvergence} and~\ref{th:receiverconvergence} \ems{as well as Theorem~\ref{th:closure}. The proof itself is facilitated by Lemma~\ref{th:senderinter}.} 

\begin{theorem}[$\mathbf{S^2ARQ}$ convergence]
	\label{th:selfstab}
	\ems{(i)} Within four appearances of $a_{s_\alpha}$ steps and four of $a_{r_\beta}$ steps in $R$, the system reaches a safe configuration. \ems{(ii) The length of the longest chain of Lamport's happened-before relation is $8$ for any execution $R$.} 
\end{theorem}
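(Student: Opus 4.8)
The plan is to reduce part~(i) to the closure theorem (Theorem~\ref{th:closure}) by exhibiting, within the first four occurrences of $a_{s_\alpha}$ and the first four of $a_{r_\beta}$, a configuration that matches the safe‑configuration template $ais = y$, $\{\bullet^\ast\}$, $\{z^{\kappa_z}\}_{z\in[0,2]\setminus\{y\}}$, $y$, $\{\bullet^\ast\}$, $\{y^{capacity+1}\}$ with $capacity\ge\sum_{z\neq y}\kappa_z$. First I would record that the steps $a_{s_1},\dots,a_{s_4}$ and $a_{r_1},\dots,a_{r_4}$ actually occur (this is the content of the liveness argument): by line~\ref{ln:Sendersend} and line~\ref{ln:receiversendACK} both endpoints send infinitely often, so by the fair‑communication assumption the Receiver eventually collects $n$ distinctly labeled packets of a single index $\neq LastDeliveredIndex_r$ (or is already synchronized) and triggers line~\ref{ln:receiverdecodedeliver}, while the Sender eventually collects $capacity+1$ distinctly labeled acknowledgments with $ldai=AltIndex_s$ and triggers line~\ref{ln:Senderencodefetch}; hence both kinds of steps recur infinitely often.

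Next I would invoke Lemma~\ref{th:senderinter} (which requires $\alpha,\beta>2$): from $a_{s_3}$ and $a_{r_3}$ on, the alternating indices evolve cyclically and in lockstep (equations~\ref{eq:sindalpha}--\ref{eq:sindbetaalpha}), so in particular $r\_index_\beta=s\_index_\alpha$ for the appropriately paired indices and $LastDeliveredIndex_r$ trails $AltIndex_s$ by one step modulo $3$. Three ``warm‑up'' cycles are exactly what is needed to erase the dependence on the arbitrary initial values of $AltIndex_s$, $LastDeliveredIndex_r$ and the initial channel/buffer contents (e.g.\ an initial coincidence $AltIndex_s=LastDeliveredIndex_r$ costs one extra fetch before the Sender's index differs again). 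I would then examine the last configuration before $a_{s_4}$ and show it fits the template with $y=s\_index_3$: by Lemma~\ref{th:senderconvergence}, throughout $[a_{s_3},a_{s_4})$ the Sender transmits only packets $\langle s\_index_3,\cdot,\cdot\rangle$, its $ACK\_set_s$ (reset at $a_{s_3}$) accumulates only acknowledgments with $ldai=s\_index_3$ and is exactly $\{s\_index_3\}\times[1,capacity+1]$ just before $a_{s_4}$, and there is a delivery $a_{r_\beta}$ in this interval after which $LastDeliveredIndex_r=s\_index_3$. It remains to show no further delivery occurs before $a_{s_4}$: after that $a_{r_\beta}$ we have $packet\_set_r=\emptyset$ (line~\ref{ln:receiverchengeLDI}) and only packets with index $\neq s\_index_3$ are ever added (line~\ref{ln:receivercheckpacketset}); since the Sender produces only $s\_index_3$‑indexed packets and $channel_{s,r}$ holds at most $capacity<n$ packets at any time, $packet\_set_r$ can never reach $n$ packets of a single index $\neq s\_index_3$. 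Hence, just before $a_{s_4}$, $AltIndex_s=s\_index_3=LastDeliveredIndex_r$, $ACK\_set_s=\{(s\_index_3)^{capacity+1}\}$, $packet\_set_r$ has at most $capacity$ packets all with indices in $[0,2]\setminus\{s\_index_3\}$, and the channels are arbitrary --- precisely Theorem~\ref{th:closure}'s template. Part~(i) follows once one checks that at most four deliveries have occurred by then: after $a_{r_1}$ the Receiver's buffer is clean, so each fetch‑interval contributes at most one delivery, which together with at most one ``garbage'' delivery gives at most four.

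Finally, for part~(ii) I would decompose the recovery into at most four consecutive intervals --- the initial interval up to $a_{s_1}$, then $[a_{s_1},a_{s_2})$, $[a_{s_2},a_{s_3})$, $[a_{s_3},a_{s_4})$ --- and argue that along any directed path of the happened‑before graph $G_R$ from $c_0$ to the safe configuration, each interval contributes at most one Sender‑to‑Receiver message edge and one Receiver‑to‑Sender message edge (the in‑buffer ``garbage'' deliveries add only zero‑weight same‑process edges), i.e.\ weight at most $2$; summing yields weight at most $8$. For tightness I would exhibit an adversarial initial configuration (empty channels, $AltIndex_s$ chosen so the first fetched index coincides with $LastDeliveredIndex_r$, and an $ACK\_set_s$ that is one short of triggering a fetch) in which all four intervals are genuinely needed and each costs a full round trip, so the heaviest directed path has weight exactly $8$. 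The hard part will be this last bookkeeping: turning the informal ``one round trip per cycle'' into a rigorous bound on the weight of \emph{every} directed path in $G_R$ (not just the particular execution order), and matching it with the lower‑bound construction --- the early intervals are the delicate ones, since the arbitrary initial state may shortcut them when favorable yet must be shown never to stretch them beyond a single round trip.
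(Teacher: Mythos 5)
Your proposal is correct and follows essentially the same route as the paper: both reduce part~(i) to the safe-configuration template of Theorem~\ref{th:closure} by using Lemma~\ref{th:senderinter} (valid for $\alpha,\beta>2$) to put the alternating indices in lockstep, bound the stray packets in $packet\_set_r$ by the channel capacity, and reuse the closure argument to get $ACK\_set_s=\{y\}\times[1,capacity+1]$; and both obtain part~(ii) by observing that the four sender steps and four receiver steps interleave so that each step's messages are necessary for the next, yielding a happened-before chain of weight $8$. Your additions (the explicit liveness argument and the tightness construction for part~(ii)) go slightly beyond what the paper writes down but do not change the approach.
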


\begin{proof}
	%
~~	\paragraph{Invariant (i) holds.}
	Let $c_{s_\alpha}(1)$ and $c_{r_\beta}(1)$ be the first configurations between $a_{s_\alpha}$ and $a_{s_{\alpha+1}}$, and respectively, between $a_{r_\beta}$ and $a_{r_{\beta+1}}$. Moreover, $s\_index_{\alpha}$ and $r\_index_{\beta}$ are $AltIndex_s$'s value in $c_{s_\alpha}(1)$, and respectively, $LastDeliveredIndex_r$'s value in $c_{r_\beta}(1)$.
	\ems{Thus, we can say} that $ais = s\_index_{\alpha}$, $\{ \bullet^\ast \}$, $\{ \bullet^\ast \}$, $r\_index_{\beta}$, $\{ \bullet^\ast \}$, $\{ \bullet^\ast  \}$ \ems{in $c_{s_\alpha}(1)$.}
	We show that, within \ems{the lemma's} four appearances of $a_{s_\alpha}$ steps and four of $a_{r_\beta}$ steps in $R$, the system reaches a configuration in which $ais$ $=$ $r\_index_{\beta+1}$, $\{ \bullet^\ast \}$, $\{ z^{\kappa_z} \}_{z \in [0,2] \setminus \{ r\_index_{\beta+1} \}}$, $r\_index_{\beta+1}$, $\{ \bullet^\ast \}$, $\{ r\_index_{\beta+1}^{capacity+1} \}$, \ems{where $capacity \geq (\sum_{z \in [0,2] \setminus \{ r\_index_{\beta+1} \}}  \kappa_z)$. Recall that Theorem~\ref{th:closure} says that this condition implies a safe configuration.}
	%
	
	%
	By Lemma~\ref{th:senderinter}, $\forall \alpha, \beta >2$ it holds that between $c_{s_\alpha}(1)$ and $c_{s_{\alpha+1}}(1)$ the system execution includes $c_{r_\beta}(1)$ in which Equation~$\ref{eq:sindalphabeta}$ holds. Namely, $\forall \alpha, \beta >3$, it holds that $r\_index_{\beta+1}=s\_index_{\alpha}$, and thus, in $c_{r_\beta}(1)$ it holds that $ais = r\_index_{\beta+1}$, $\{ \bullet^\ast \}$, $\{ z^{\kappa_z} \}_{z \in [0,2] \setminus \{ r\_index_{\beta+1} \}}$, $r\_index_{\beta+1}$, $\{ \bullet^\ast \}$, $\{ r\_index_{\beta+1}^{\kappa_{r\_index_{\beta+1}}} \}$. 
	
	The rest of the proof \ems{of invariant (i)} demonstrates that $capacity+1={\kappa_{r\_index_{\beta+1}}}$ and $capacity \geq (\sum_{z \in [0,2] \setminus \{ r\_index_{\beta+1} \}}  \kappa_z)$. \ems{The proof of the former is followed} by arguments similar to the ones in the proof of Theorem~\ref{th:closure}, \ems{cf. item (iv). For the proof of the latter, recall the above we showed that between $a_{s_\alpha}$ and $a_{s_{\alpha+1}}$ the sender sends at least one packet with the alternating index $s\_index_{\alpha}=r\_index_{\beta+1}$ over $channel_{s,r}$. The $capacity \geq (\sum_{z \in [0,2] \setminus \{ r\_index_{\beta+1} \}}  \kappa_z)$ is implied by the bound on the channel. Thus, the proof of invariant (i).} 

~~	\paragraph{Invariant (ii) holds.} \ems{From the proof of invariant (i), we observe that the appearance of steps $a_{s_\alpha}, a_{s_{\alpha+1}},a_{s_{\alpha+2}}$ and $a_{s_{\alpha+3}}$ in $R$ is interleaved eventually with the appearance of steps $a_{r_\beta},a_{r_{\beta+1}},a_{r_{\beta+2}}$ and $a_{r_{\beta+4}}$. Moreover, each such $a_{s_\alpha}$ step (and $a_{r_\beta}$ step) sends messages that are sufficient and necessary for the execution of a single $a_{r_\beta}$ step ($a_{s_\alpha}$ step, respectively). Thus, these 8 steps imply that the length of the longest chain of Lamport's happened-before relation is $8$ for any execution $R$ during which the system recover from the occurrence of the last transient fault.}
\end{proof}

\Section{Conclusions}
We have proposed self-stabilizing \ems{Automatic Repeat reQuest (ARQ)} algorithms for bounded capacity computer networks. The proposed algorithms inculcate error correction methods for the delivery of messages to their destination without omissions, duplications or reordering. We consider two nodes, one as the Sender and the other as the Receiver. In many cases, however, two communicating nodes may act both as senders and receivers simultaneously. In such situations, acknowledgment piggybacking may reduce the \ems{communication cost} needed to cope with up to $capacity$ erroneous packets that may exist in each direction, from the Sender to the Receiver {\em and} from the Receiver to the Sender (i.e., the erroneous packets that can appear in channel due to transient faults that occurred before the starting configuration). Using piggybacking, the \ems{communication cost} is similar in both directions. The obtained \ems{communication cost} is proportional to the ratio between the number of bits in the original message, and the number of bits in the coded message, which is a code that withstands $capacity$ corruptions. Thus, for a specific {\em capacity} and assuming encoding efficiency, this \ems{communication cost} becomes smaller as the message length grows.

\section*{Acknowledgments}
The work of the first author, S. Dolev, was partially supported by Deutsche Telekom, Rita Altura Trust Chair in Computer Sciences, Lynne and William Frankel Center for Computer Sciences, Israel Science Foundation (grant number 428/11), Cabarnit Cyber Security MAGNET Consortium, Grant from the Institute for Future Defense Technologies Research named for the Medvedi of the Technion, and Israeli Internet Association. 


\begin{thebibliography}{10}
	
	\bibitem{DBLP:journals/dc/AfekB93}
	Yehuda Afek and Geoffrey~M. Brown.
	\newblock Self-stabilization over unreliable communication media.
	\newblock {\em Distributed Computing}, 7(1):27--34, 1993.
	
	\bibitem{DBLP:conf/podc/AfekGR92}
	Yehuda Afek, Eli Gafni, and Adi Ros{\'e}n.
	\newblock The slide mechanism with applications in dynamic networks (extended
	abstract).
	\newblock In {\em Proceedings of the Eleventh Annual ACM Symposium on
		Principles of Distributed Computing, Vancouver, British Columbia, Canada,
		August 10-12, 1992}, pages 35--46, 1992.
	
	\bibitem{DBLP:conf/focs/AwerbuchPV91}
	Baruch Awerbuch, Boaz Patt-Shamir, and George Varghese.
	\newblock Self-stabilization by local checking and correction.
	\newblock In {\em 32nd Annual Symposium on Foundations of Computer Science, San
		Juan, Puerto Rico, 1-4 October 1991}, pages 268--277, 1991.
	
	\bibitem{DBLP:conf/pdcat/BeinMY09}
	Doina Bein, Toshimitsu Masuzawa, and Yukiko Yamauchi.
	\newblock Reliable communication on emulated channels resilient to transient
	faults.
	\newblock In {\em 2009 International Conference on Parallel and Distributed
		Computing, Applications and Technologies (PDCAT 2009), Higashi Hiroshima,
		Japan, 8-11 December 2009}, pages 366--371, 2009.
	
	\bibitem{DBLP:conf/wss/BuiDPV99}
	Alain Bui, Ajoy~Kumar Datta, Franck Petit, and Vincent Villain.
	\newblock State-optimal snap-stabilizing {PIF} in tree networks.
	\newblock In {\em 1999 ICDCS Workshop on Self-stabilizing Systems, Austin,
		Texas, June 5, 1999, Proceedings}, pages 78--85, 1999.
	
	\bibitem{DBLP:conf/sss/CournierDLPV10}
	Alain Cournier, Swan Dubois, Anissa Lamani, Franck Petit, and Vincent Villain.
	\newblock Snap-stabilizing linear message forwarding.
	\newblock In {\em Stabilization, Safety, and Security of Distributed Systems -
		12th International Symposium, SSS 2010, New York, NY, USA, September 20-22,
		2010. Proceedings}, pages 546--559, 2010.
	
	\bibitem{DBLP:conf/ipps/CournierDV09}
	Alain Cournier, Swan Dubois, and Vincent Villain.
	\newblock A snap-stabilizing point-to-point communication protocol in
	message-switched networks.
	\newblock In {\em 23rd IEEE International Symposium on Parallel and Distributed
		(IPDPS'09)}, pages 1--11, 2009.
	
	\bibitem{DBLP:journals/jpdc/DelaetDNT10}
	Sylvie Dela{\"e}t, St{\'e}phane Devismes, Mikhail Nesterenko, and S{\'e}bastien
	Tixeuil.
	\newblock Snap-stabilization in message-passing systems.
	\newblock {\em J. Parallel Distrib. Comput.}, 70(12):1220--1230, 2010.
	
	\bibitem{DBLP:journals/cacm/Dijkstra74}
	Edsger~W. Dijkstra.
	\newblock Self-stabilizing systems in spite of distributed control.
	\newblock {\em Commun. ACM}, 17(11):643--644, 1974.
	
	\bibitem{D2K}
	Shlomi Dolev.
	\newblock {\em {Self-Stabilization}}.
	\newblock MIT Press, 2000.
	
	\bibitem{DBLP:journals/ipl/DolevDPT11}
	Shlomi Dolev, Swan Dubois, Maria Potop-Butucaru, and S{\'e}bastien Tixeuil.
	\newblock Stabilizing data-link over non-{FIFO} channels with optimal
	fault-resilience.
	\newblock {\em Inf. Process. Lett.}, 111(18):912--920, 2011.
	
	\bibitem{DBLP:conf/sss/DolevHSS12}
	Shlomi Dolev, Ariel Hanemann, Elad~Michael Schiller, and Shantanu Sharma.
	\newblock Self-stabilizing end-to-end communication in (bounded capacity,
	omitting, duplicating and non-{FIFO}) dynamic networks - (extended abstract).
	\newblock In Andr{\'{e}}a~W. Richa and Christian Scheideler, editors, {\em
		Stabilization, Safety, and Security of Distributed Systems - 14th
		International Symposium, {SSS} 2012, Toronto, Canada, October 1-4, 2012.
		Proceedings}, volume 7596 of {\em Lecture Notes in Computer Science}, pages
	133--147. Springer, 2012.
	
	\bibitem{DBLP:journals/siamcomp/DolevIM97}
	Shlomi Dolev, Amos Israeli, and Shlomo Moran.
	\newblock Resource bounds for self-stabilizing message-driven protocols.
	\newblock {\em SIAM J. Comput.}, 26(1):273--290, 1997.
	
	\bibitem{DBLP:journals/tc/DolevW97}
	Shlomi Dolev and Jennifer~L. Welch.
	\newblock Crash resilient communication in dynamic networks.
	\newblock {\em IEEE Trans. Computers}, 46(1):14--26, 1997.
	
	\bibitem{DBLP:conf/ispan/FlauzacV97}
	Olivier Flauzac and Vincent Villain.
	\newblock An implementable dynamic automatic self-stabilizing protocol.
	\newblock In {\em International Symposium on Parallel Architectures, Algorithms
		and Networks (ISPAN '97), 18-20 December 1997, Taipei, Taiwan}, pages 91--97,
	1997.
	
	\bibitem{forouzan2006data}
	A~Behrouz Forouzan.
	\newblock {\em Data Communications and Networking}.
	\newblock Tata McGraw-Hill Education, 4nd edition, 2006.
	
	\bibitem{DBLP:journals/tc/GoudaM91}
	Mohamed~G. Gouda and Nicholas~J. Multari.
	\newblock Stabilizing communication protocols.
	\newblock {\em IEEE Trans. Computers}, 40(4):448--458, 1991.
	
	\bibitem{DBLP:conf/stoc/KushilevitzOR95}
	Eyal Kushilevitz, Rafail Ostrovsky, and Adi Ros{\'e}n.
	\newblock Log-space polynomial end-to-end communication.
	\newblock In {\em Proceedings of the Twenty-Seventh Annual ACM Symposium on
		Theory of Computing, 29 May-1 June 1995, Las Vegas, Nevada, USA}, pages
	559--568, 1995.
	
	\bibitem{DBLP:journals/cacm/Lamport78}
	Leslie Lamport.
	\newblock Time, clocks, and the ordering of events in a distributed system.
	\newblock {\em Commun. {ACM}}, 21(7):558--565, 1978.
	
	\bibitem{Moon}
	Todd~K. Moon.
	\newblock {\em Error Correction Coding: Mathematical Methods and Algorithms}.
	\newblock Wiley-Interscience, 2005.
	
	\bibitem{reed1960polynomial}
	Irving~S Reed and Gustave Solomon.
	\newblock Polynomial codes over certain finite fields.
	\newblock {\em Journal of the society for industrial and applied mathematics},
	8(2):300--304, 1960.
	
	\bibitem{DBLP:journals/ton/Spinelli97}
	John Spinelli.
	\newblock Self-stabilizing sliding window \textmd{ARQ} protocols.
	\newblock {\em IEEE/ACM Trans. Netw.}, 5(2):245--254, 1997.
	
	\bibitem{stevens2004unix}
	W~Richard Stevens, Bill Fenner, and Andrew~M Rudoff.
	\newblock {\em UNIX network programming}, volume~1.
	\newblock Addison-Wesley Professional, 2004.
	
	\bibitem{DBLP:books/daglib/0008392}
	Andrew~S. Tanenbaum.
	\newblock {\em Computer networks {(4.} ed.)}.
	\newblock Prentice Hall, 2002.
	
	\bibitem{Tel2001}
	Gerard Tel.
	\newblock {\em Introduction to Distributed Algorithms}.
	\newblock Cambridge University Press, New York, NY, USA, 2nd edition, 2001.
	
	\bibitem{Varghese93}
	George Varghese.
	\newblock {\em Self-Stabilization by Local Checking and Correction}.
	\newblock PhD thesis, Laboratory for Computer Science (LCS), Massachusetts
	Institute of Technology (MIT), 1992.
	
\end{thebibliography}

\end{document}